%
%
%


\documentclass{amsproc}



\usepackage[margin=1in]{geometry}
\usepackage{amsfonts}
\usepackage{amssymb}
\usepackage[dvips]{graphics}
\usepackage{epsfig}
\pagestyle{myheadings}
\usepackage{euscript}
\usepackage{color}
\usepackage{bbm}
\usepackage[colorlinks ,pdfstartview=FitB]{hyperref}
\usepackage{pgf,tikz}
\usepackage{mathrsfs}
\usetikzlibrary{arrows}
\hypersetup{allcolors=blue}

\vfuzz2pt 


\newtheorem{thm}{Theorem}[section]

\newtheorem{proposition}[thm]{Proposition}
\theoremstyle{definition}

\numberwithin{thm}{section}

\newcommand{\N}{{\mathord{\mathbb N}}}

\newcommand{\cH}{\mathcal{H}}

\newcommand{\cN}{\mathcal{N}}

\newcommand{\C}{{\mathord{\mathbb C}}}
\newcommand{\Z}{{\mathord{\mathbb Z}}}

\newcommand{\cG}{\mathcal{G}}
\newcommand{\cV}{\mathcal{V}}
\newcommand{\cL}{\mathcal{L}}
\newcommand{\cE}{\mathcal{E}}

\def\idty{{\mathchoice {\mathrm{1\mskip-4mu l}} {\mathrm{1\mskip-4mu l}} %
{\mathrm{1\mskip-4.5mu l}} {\mathrm{1\mskip-5mu l}}}}

\DeclareMathOperator{\spann}{span}
\DeclareMathOperator{\supp}{supp}

%



\newtheorem{theorem}{Theorem}[section]
\newtheorem{lemma}[theorem]{Lemma}

\theoremstyle{definition}

\newtheorem{example}[theorem]{Example}

\theoremstyle{remark}
\newtheorem{remark}[theorem]{Remark}

\numberwithin{equation}{section}

\begin{document}

\title[Higher Spin XXZ Systems on General Graphs]{A Schr\"odinger Operator Approach to Higher Spin XXZ Systems on General Graphs}


\author{Christoph Fischbacher}
\address{Department of Mathematics, University of Alabama at Birmingham, Birmingham, AL 35294, USA}
\email{cfischb@uab.edu}
\thanks{}

\subjclass[2000]{Primary 82B20}

\date{}

\begin{abstract} We consider the spin--$J$ XXZ--Hamiltonian on general graphs $\cG$ and show its equivalence to a direct sum of discrete many--particle Schr\"odinger type operators on what we call ``$N$-particle graphs with maximal local occupation number $M$", where the kinetic term is described by a weighted Laplacian. Generalizing previous results for the spin--$1/2$ case, we give sufficient conditions for the existence of spectral gaps above the low--lying droplet band when the underlying graph $\cG$ is (i) the chain and (ii) a strip of width $L$.  
\end{abstract}

\maketitle
\section{Introduction} 
Recently, there has been a significant increase of interest towards the Heisenberg spin--$1/2$ XXZ model as it is one of the first models for which phenomena associated with Many Body Localization (MBL) were rigorously proven \cite{BW1, BW2, EKS1, EKS2}. One of the many interesting open questions related to MBL for the XXZ model is whether previous results can be generalized to the case of higher spins (cf. \cite{StolzProc}). The purpose of this paper is to lay a foundation for further investigations of the XXZ model on this direction.
  
The Heisenberg spin--$1/2$ XXZ model has been subject to numerous scientific publications. In particular, it is well--known that the XXZ model defined on an infinite chain is exactly solvable with the help of the Bethe ansatz \cite{BabbittThomas1977,BabbittThomas1978,BabbittThomas1979,BabbittGutkin1990,Bethe,Borodinetal,Gutkin1993,Gutkin2000,Thomas1977}. 
For more results on the low--lying spectrum of the XXZ chain, in particular questions about the ground state gap and the so called droplet spectrum, we refer the interested reader to \cite{FS13,KN1,KN2,NS,NSS,Starr} for the spin--$1/2$ case and to \cite{KNS,MNSS} for the case of higher spins. 

In a recent paper \cite{FS18}, the Heisenberg XXZ spin-$1/2$ model defined on general graphs was investigated. Using the Hamiltonian's conservation of total magnetization, the underlying Hilbert space was decomposed into subspaces of fixed total number of down-spins, which are then viewed as particles embedded in a ``vacuum" of up-spins. It was shown that the XXZ Hamiltonian restricted to such a subspace of fixed magnetization/particle number is unitarily equivalent to a discrete Schr\"odinger operator of hard--core bosons with an interaction potential that energetically favors states that minimize their edge surface or in other words: form droplets. 
This approach has proven itself to be useful as it makes the model accessible to Schr\"odinger operator methods. In particular, while not being exactly solvable with the Bethe ansatz anymore, it was still possible to show that for quasi one--dimensional graphs (e.g.\ on strips), the notion of droplet spectrum that is uniform in the total magnetization/particle number still persists -- a result which we will generalize to the case of higher spins.

We will proceed as follows:

In Section \ref{sec:XXZ}, we introduce the spin--$J$ XXZ Hamiltonian $H_\cG$ on finite graphs $\cG$. Using the operator's conservation of total magnetization, we decompose the Hilbert space in subspaces of fixed total magnetization and study the Hamiltonian restricted to these subspaces. An analogous approach is to view the conserved total magnetization as a conserved total number of particles $N$. In this framework, a spin state at a site of the graph corresponds to a certain number of particles occupying that site. As the state space of a single spin is described by a finite--dimensional Hilbert space, the numbers of particles that can occupy a single site is limited. We then construct what we call ``$N$--particle graphs with maximal local occupation number $M$", which we denote by $\cG^{M,N}$ and show in Proposition \ref{prop:equivalence}, that the spin--$J$ Hamiltonian restricted to spaces of fixed total magnetization is unitarily equivalent to a discrete Schr\"odinger operator of the form $-\frac{1}{2\Delta}A_N+V_N$, where $A_N$ is a weighted adjacency operator on $\cG^{M,N}$ and $V_N$ is a multiplication operator by an interaction potential. Compared to the case $J=1/2$, which was studied in \cite{FS18}, and where it was possible to relate the value of the interaction potential to the edge--surface of spin configurations, the interaction potential $V_N$ takes a significantly more complicated form for higher values of $J$. Nevertheless, as in the spin--$1/2$ case, this equivalence makes it still accessible to Schr\"odinger operator methods.
 
In Section \ref{sec:gaps}, we adapt a result from \cite{FS18}, where a rather general sufficient condition for the existence of spectral gaps for discrete Schr\"odinger operators on general graphs has been established. As this result only allowed Schr\"odinger operators with non--weighted adjacency operators to be considered, we present the necessary modifications that have to be made that allow us to cover the case of weighted adjacency operators.

Finally, in Section \ref{sec:droplet}, we discuss the existence of gaps in the low--lying spectrum of $H_\cG$. We apply the results from the previous sections in order to show sufficient conditions on the anisotropy parameter $\Delta$ that ensure the existence of a gap above what we call the droplet spectrum. The cases we will study are the spin--$J$ model defined on the chain and on strips of width $L$. 
\vspace{0.3cm}

{\bf Acknowledgements:} This article contains a generalized version of results presented by the author during the Arizona School of Analysis and Mathematical Physics at the University of Arizona in March 2018. I would like to thank its organizers Houssam Abdul--Rahman, Robert Sims and Amanda Young for inviting me to attend and for providing me with the opportunity to give a talk there.
I am also very grateful to G\"unter Stolz for the results we have obtained together for the spin--$1/2$ case and for all the useful and insightful discussions. I would also like to thank Luis Manuel Rivera for pointing out some useful graph--theoretic references.

\section{The Spin-$J$ XXZ model on general graphs} \label{sec:XXZ}
\subsection{Spin--$J$ at single sites}
For any spin number $J\in\frac{1}{2}\N=\{1/2,1,3/2,2,\dots\}$ let $M+1:=2J+1$ be the dimension of the \emph{single site} Hilbert space $\C^{M+1}$. Moreover, let $\{\delta_i\}_{i=-J}^{J}$ denote the canonical orthonormal basis of $\C^{M+1}$, i.\ e.\ $$(\delta_i)_j=\begin{cases} 1 \quad\mbox{if} \quad i=j\\ 0 \quad\mbox{else}\end{cases}.$$ For convenience, we introduce the Pauli spin-$J$ matrices $S^1, S^2$ and $S^3$ in the following way (cf.\ \cite{MNSS}). Firstly, let the spin--lowering operator $S^-$ and the spin raising operator $S^+$ be given by
\begin{align}
S^-\delta_j:=\begin{cases}\sqrt{J(J+1)-j(j-1)}\delta_{j-1}\quad &j\in\{-J+1,-J+2,\dots,J-1,J\}\\
0\quad &\mbox{else.}
\end{cases}\\
S^+\delta_j:=\begin{cases}\sqrt{J(J+1)-j(j+1)}\delta_{j+1}\quad &j\in\{-J,-J+1,\dots,J-2,J-1\}\\
0\quad &\mbox{else.}
\end{cases}
\end{align}
The operators $S^1$ and $S^2$ are then given by
\begin{align}
S^1&:=\frac{S^++S^-}{2}\quad\mbox{and}\quad
S^2:=\frac{S^+-S^-}{2i}\:.
\end{align}
The last spin matrix $S^3$ is already diagonal with respect to $\{\delta_j\}_{j=-J}^J$:
\begin{equation}
S^3\delta_j=j\cdot\delta_j\:.
\end{equation}
\subsection{Definition of the Hamiltonian on finite graphs}
Let $\cG=(\cV,\cE)$ be a finite graph with vertex set $\cV$ and edge set $\cE$. Moreover, assume once and for all that any considered graph $\cG$ is connected. We construct the $(M+1)^{|\mathcal{V}|}$--dimensional\footnote{Here and in the following, $|\mathcal{A}|$ denotes the cardinality of a set $\mathcal{A}$.} Hilbert space $\cH_\cG$ as
\begin{equation}
\cH_\cG:=\bigotimes_{x\in\cV}\cH_x\:,
\end{equation}
where $\cH_x=\C^{M+1}$ is interpreted as a single site Hilbert space located at the site $x$. Let $\Delta>M$. Then, for any $x,y\in\cV$, we define the two--site Hamiltonian $h_{xy}$ as
\begin{equation} \label{eq:twosite}
h_{xy}:=J^2-\frac{1}{\Delta}\left(S_x^1S_y^1+S_x^2S_y^2\right)-S_x^3S_y^3=J^2-\frac{1}{2\Delta}\left(S_x^+S_y^-+S_x^-S_y^+\right)-S_x^3S_y^3\:,
\end{equation}
where we adapt the convention that for any operator $A$ on $\C^{M+1}$ and any $x_0\in\mathcal{V}$, we define $A_{x_0}$ as an operator on $\cH_\cG$ by letting it act as $A$ on $\cH_{x_0}$ and as the identity on all other sites in $\cV\setminus\{x_0\}$. As we shall see below (Remark \ref{rem:nonnegativity}), for $\Delta>M$, we have $h_{xy}\geq 0$. The full XXZ--Hamiltonian $H_\cG$ is then defined by
\begin{equation} \label{eq:finiteXXZ}
H_\cG:=\sum_{x,y\in\cV: x\sim y}h_{xy}\:,
\end{equation}
where $x\sim y$ means that $\{x,y\}\in\cE$. 

\subsection{Conservation of particle number}
Let us define the \emph{local particle number operator} $\cN^{loc}:=(J-S^3)$. Trivially, the spectrum of $\cN^{loc}$ is given by $\sigma(\cN^{loc}):=\{0,1,2,\dots,M\}$ with corresponding eigenvectors 
\begin{equation}
\cN^{loc}\delta_j=(J-j)\delta_j\:.
\end{equation}
For $x\in\cV$, we interpret the eigenvalues $\{J-j\}_{j=-J}^J$ of $\cN^{loc}_x$ as the \emph{number of particles} at the site $x$. In this sense, the number $M=2J$ denotes the \emph{maximal number} of particles that are allowed to occupy any site $x\in\cV$. For later convenience, we relabel the elements of $\{\delta_j\}_{j=-J}^J$ as follows
\begin{equation}
e_k:=\delta_{J-k}\:,
\end{equation}
which defines the orthonormal basis $\{e_k\}_{k=0}^M$. 
The \emph{total particle number operator} $\cN_\cG$ is then just given by
\begin{equation}
\cN_\cG:=\sum_{x\in\cV}\cN_x^{loc}\:,
\end{equation}
with spectrum $\sigma(\cN_\cG)=\{0,1,\dots, |\cV|\cdot M-1, |\cV|\cdot M\}$. Let $\psi_{\bf{0}}:=\bigotimes_{x\in\cV}(e_0)_x$ denote the (non--degenerate) eigenvector of $\cN_\cG$ corresponding to the eigenvalue zero, which we will refer to as the ``vacuum vector". At this point, it will be convenient to introduce the creation operator $a$ and the annihilation operator $a^*$, which are given by
\begin{align}
	a e_k&:= \begin{cases} e_{k+1}\quad &k\in\{0,1,\dots,M-1\}\\0\quad &k=M\end{cases}\\
	a^* e_k&:= \begin{cases} e_{k-1}\quad &k\in\{1,2,\dots,M\}\\0\quad &k=0\end{cases} \:.
\end{align}
Let $\mathfrak{M}$ be the set of all functions from $\cV$ to $\{0,1,2,\dots,M\}$:
\begin{equation} \label{eq:M}
\mathfrak{M}=\left\{\mathfrak{m}:\cV\rightarrow\{0,1,2,\dots,M\}\right\}\:.
\end{equation}
	For any $\mathfrak{m}\in\mathfrak{M}$, define
\begin{equation}
\psi_\mathfrak{m}:=\prod_{x\in\mathcal{V}}(a_x)^{\mathfrak{m}(x)}\psi_{{\bf{0}}}
\end{equation}
with the interpretation that $\psi_\mathfrak{m}$ corresponds to a state with exactly $\mathfrak{m}(x)$ particles located at each site $x$.

For any eigenvalue $N\in\sigma(\cN_\cG)$ let $\mathcal{H}_\cG^N$ denote the corresponding eigenspace (the $N$-particle subspace). It is now important to notice that these eigenspaces reduce the operator $H_\cG$, i.e. the total particle number is conserved under the action of $H_\cG$. We therefore decompose 
\begin{equation}
H_\cG=\bigoplus_{N=0}^{|\cV|\cdot M}H_\cG^N\:,
\end{equation}
where $H_\cG^N:=H_\cG\upharpoonright_{\mathcal{H}_\cG^N}$.
Defining
\begin{equation} \label{eq:MN}
\mathfrak{M}_N=\left\{\mathfrak{m}\in\mathfrak{M}:\sum_{x\in\cV}\mathfrak{m}(x)=N\right\}\:,
\end{equation}
note that we get
\begin{equation}
\mathcal{H}_\cG^N={\spann\{\psi_\mathfrak{m}:\mathfrak{m}\in\mathfrak{M}_N\}}\:.
\end{equation}
\subsection{Construction of the $N$--particle graphs with maximal local occupation number $M$}
For the following construction, note that we are \emph{not} assuming that the underlying graph $\cG=(\cV,\cE)$ has a finite vertex set.

Firstly, we need to slightly modify the definition of $\mathfrak{M}$ given in \eqref{eq:M} to be the set of all finitely supported functions from $\cV$ to $\{0,1,2,\dots,M\}$:
\begin{equation}
\mathfrak{M}=\left\{{\mathfrak{m}}: \cV\rightarrow \{0,1,\dots,M\}, |\supp(\mathfrak{m})|<\infty\right\}\:,
\end{equation}
which is countable. As in \eqref{eq:MN}, we also define the following subsets of $\mathfrak{M}$:
\begin{equation}
\mathfrak{M}_N=\left\{\mathfrak{m}\in\mathfrak{M}:\sum_{x\in\cV}\mathfrak{m}(x)=N\right\}
\end{equation}
and note that
\begin{equation}
\mathfrak{M}=\biguplus_{N\geq 0}\mathfrak{M}_N\:.
\end{equation}
For any $M\in\N$ and any $N\in\{0,1,\dots,|\cV|\cdot M\}$, we now introduce the graph $\cG^{M,N}=(\cV^{M,N},\cE^{M,N})$, which we call ``\emph{$N$--particle graphs with maximal local occupation number $M$}". Its vertex set $\cV^{M,N}$ is given by
\begin{equation}
\cV^{M,N}:=\mathfrak{M}_N
\end{equation}
and the edge set $\cE^{M,N}$ is defined as follows
\begin{equation}
\cE^{M,N}=\left\{\{\mathfrak{m,n}\}\subset\cV^{M,N}: \exists \{x_+,x_-\}\in\cE: \mathfrak{m}(x_\pm)=\mathfrak{n}(x_\pm)\pm 1 \mbox{ and } \forall z\in\cV\setminus\{x_+,x_-\}: \mathfrak{m}(z)=\mathfrak{n}(z) \right\}\:.
\end{equation} 
In other words, two $N$-particle configurations $\mathfrak{m,n}\in\mathfrak{M}_N$ are adjacent to each other in $\cG^{M,N}$ if one configuration can be obtained from the other by moving exactly one particle from configuration $\mathfrak{m}$ along an edge $\{x,y\}\in\cE$ from $x$ to $y$. 
\begin{remark}\label{rem:infinite}
	Note that if $\cG$ is a countable infinite connected graph of bounded maximal degree, so will be $\cG^{M,N}$.
\end{remark}
\begin{remark}
	For the case $M=1$, note that the so--constructed graphs $\cG^{1,N}=(\cV^{1,N},\cE^{1,N}) $ can be identified with the well--known $N$-th symmetric product $\cG^N=(\cV^N,\cE^N)$ of $\cG$ (cf.\ e.g.\ \cite{Carballosa, FS18, O2, O1, Rivera}), where $$\cV^N=\{X\subset \cG:|X|=N\} \mbox{ and } \cE^N=\{\{X,Y\}: X\triangle Y\in\cE\}\:.$$ Since $$\cV^{1,N}=\left\{\mathfrak{m}:\cV\rightarrow \{0,1\}:\sum_{x\in\cV}\mathfrak{m}(x)=N\right\}\:,$$
	where we observe that $\sum_{x\in\cV}\mathfrak{m}(x)=|\mbox{supp}(\mathfrak{m})|$, we identify any $X\in\cV^N$ with the function $\mathfrak{m}_X\in\cV^{1,N}$ given by \begin{equation*} \mathfrak{m}_X(x)=\begin{cases} 1\quad\mbox{if}\quad x\in X\\
	0\quad\mbox{if}\quad x\notin X\:.
	\end{cases}
	\end{equation*}
	From direct inspection, it can be seen that $\{X,Y\}\in\cE^{N}$ if and only if $\{\mathfrak{m}_X,\mathfrak{m}_Y\}\in\cE^{1,N}$.
\end{remark}
\begin{remark} In \cite{HammackSmith}, the reduced $k$--th power $\cG^{(k)}$ of a graph $\cG$ was constructed, which coincides with our construction in the case $N=M=k$. In other words, the $k$--particle graph with maximal local occupation number $k$ is equal to the reduced $k$-th power of $\cG$, i.e.\ $\cG^{k,k}=\cG^{(k)}$.
\end{remark}
\subsection{The XXZ Hamiltonian and its equivalence to a many-particle Schr\"odinger operator}
In the following, we are going to show that the operators $H_{\cG}^N$ are unitarily equivalent to discrete Schr\"odinger--type operators on a suitable Hilbert space. To this end, let $\cG=(\cV,\cE)$ be a finite graph with bounded maximal degree. 

Consider now the Hilbert space $\ell^2(\mathfrak{M}_N)$ equipped with the inner product $\langle \cdot,\cdot\rangle$, given by
\begin{equation}
\langle f,g\rangle=\sum_{\mathfrak{m}\in\mathfrak{M}_N}\overline{f(\mathfrak{m})}g(\mathfrak{m})\:.
\end{equation}
Moreover, let $\phi_\mathfrak{m}\in\ell^2(\mathfrak{M}_N)$ denote the canonical basis vectors given by
\begin{equation} \label{eq:canbasis}
\phi_\mathfrak{m}(\mathfrak{n})=\begin{cases}
1\quad \mbox{if} \quad\mathfrak{m}=\mathfrak{n}\\
0 \quad \mbox{if} \quad\mathfrak{m}\neq\mathfrak{n}\:.
\end{cases}
\end{equation}
Let $\mathfrak{m,n}$ now be two arbitrary configurations in $\cV^{M,N}$ such that $\mathfrak{m}\sim\mathfrak{n}$. In particular, this means that there exists two uniquely determined $x_+(\mathfrak{m,n}),x_-(\mathfrak{m,n})\in{\cV}$ with $\{x_+,x_-\}\in\cE$, such that $\mathfrak{m}(x_\pm)=\mathfrak{n}(x_\pm)\pm 1$ and $\mathfrak{m}(z)=\mathfrak{n}(z)$ for any other $z\in\cV\setminus\{x_+,x_-\}$. (For notational convenience, we omit writing the dependence of $x_\pm$ on $\mathfrak{m}$ and $\mathfrak{n}$.) We then define the \emph{weight function} $w(\mathfrak{m,n})$ to be given by
\begin{align}
	w(\mathfrak{m,n}):&=\left(\frac{M}{2}\left(\mathfrak{m}(x_+)+\mathfrak{n}(x_+)+1\right)-\mathfrak{m}(x_+)\mathfrak{n}(x_+)\right)^{1/2}\cdot\left(\frac{M}{2}\left(\mathfrak{m}(x_-)+\mathfrak{n}(x_-)+1\right)-\mathfrak{m}(x_-)\mathfrak{n}(x_-)\right)^{1/2}\\&=\prod_{x:\mathfrak{m}(x)\neq\mathfrak{n}(x)}\left(\frac{M}{2}\left(\mathfrak{m}(x)+\mathfrak{n}(x)+1\right)-\mathfrak{m}(x)\mathfrak{n}(x)\right)^{1/2}\:.
\end{align}
From this, it is easily observed that $w(\mathfrak{m,n})=w(\mathfrak{n,m})$.
We then define the \emph{weighted adjacency operator} $A_{N}$ on $\cG^{M,N}$ to be given by
\begin{align}
	A_{N}:\qquad\qquad\ell^2(\mathfrak{M}_N)&\rightarrow\ell^2(\mathfrak{M}_N)\\
	(A_{N}f)(\mathfrak{m})&=\sum_{\mathfrak{n}:\mathfrak{n}\sim\mathfrak{m}}w(\mathfrak{m,n})f(\mathfrak{n})\:.
\end{align}
Also, we define the \emph{weighted Laplacian} $\mathcal{L}_{N}$ as follows:
\begin{equation}
(\mathcal{L}_{N}f)(\mathfrak{m})=\sum_{\mathfrak{n}:\mathfrak{n}\sim\mathfrak{m}}w(\mathfrak{m,n})\left(f(\mathfrak{n})-f(\mathfrak{m})\right)\:,
\end{equation}
and note that $-\cL_N$ is a non-negative operator.
The relation between $A_{N}$ and $\mathcal{L}_{N}$ is given by
\begin{equation} \label{eq:laplace}
-\mathcal{L}_{N}=D_N-A_N\:,
\end{equation}
where $D_N$ is the multiplication operator by the \emph{weighted degree function}:
\begin{equation} \label{eq:D}
(D_Nf)(\mathfrak{m})=\left(\sum_{\mathfrak{n}:\mathfrak{n}\sim\mathfrak{m}}w(\mathfrak{m,n})\right)f(\mathfrak{m})\:.
\end{equation}
Moreover, for any configuration $\mathfrak{m}$, we define the \emph{total interaction potential} $V_N$ to be the multiplication operator given by
\begin{align}
	V_N:\qquad\ell^2(\mathfrak{M}_N)&\rightarrow\ell^2(\mathfrak{M}_N)\\ \label{eq:V}
	(V_Nf)(\mathfrak{m})&:=\left(\sum_{\{x,y\}\in\cE}v(\mathfrak{m}(x),\mathfrak{m}(y))\right)f(\mathfrak{m})\:,
\end{align}
where the value of the \emph{two--site interaction potential} $v$ is given by
\begin{align}
	v(\mathfrak{m}(x),\mathfrak{m}(y))=\frac{M}{2}(\mathfrak{m}(x)+\mathfrak{m}(y))-\mathfrak{m}(x)\mathfrak{m}(y)\:.
\end{align}
\begin{remark} \label{rem:infiniteprep}
	Again, note that for the definition of $A_N, D_N, \cL_N$ and $V_N$ it is not necessary to assume that $\cG$ is a finite graph. As mentioned in Remark \ref{rem:infinite}, as long as $\cG$ is of bounded maximal degree, so will be $\cG^{M,N}$, which implies that $A_N, D_N$ and $\cL_N$ are well--defined. Moreover, note that $v(0,0)=0$. Now, since for any $\mathfrak{m}\in\cV^{M,N}$, we have $\sum_{x\in\cV}\mathfrak{m}(x)=N$, this implies that $\mathfrak{m}$ is non--zero for only finitely many $x\in\cV$. Hence, the sum on the right--hand side of \eqref{eq:V} is only over finitely many terms.  
\end{remark}
We are now prepared to state the equivalence between the spin-$J$ XXZ model on any finite graph and many--particle Schr\"odinger operators:
\begin{proposition} \label{prop:equivalence}
	Let $2J=M\in\mathbb{N}$ and assume that $\cG=(\cV,\cE)$ is a finite connected graph. For any $N\in\mathcal\{1,\dots, M\cdot|\cV|\}$, let the unitary map $U_N$ be given by
	\begin{align}
		U_N:\qquad \cH_\cG^N&\rightarrow\ell^2(\mathfrak{M}_N)\\
		\psi_\mathfrak{m}&\mapsto \phi_\mathfrak{m}\:.
	\end{align}
	Then, 
	\begin{equation}
	U_NH_\cG^NU_N^*=-\frac{1}{2\Delta}A_N+V_N=-\frac{1}{2\Delta}\cL_N+V_N-\frac{1}{2\Delta}D_N\:.
	\end{equation}
	
\end{proposition}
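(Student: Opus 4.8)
The plan is to verify the asserted operator identity directly on the orthonormal basis $\{\psi_\mathfrak{m}\}_{\mathfrak{m}\in\mathfrak{M}_N}$ of $\cH_\cG^N$. The first step is to observe that, since $a e_k=e_{k+1}$ carries no normalization constant, one has $a^{k}e_0=e_k$ and hence $\psi_\mathfrak{m}=\bigotimes_{x\in\cV}(e_{\mathfrak{m}(x)})_x$; thus the $\psi_\mathfrak{m}$ are unit vectors, pairwise orthogonal, and (as already recorded in the text) span $\cH_\cG^N$. Therefore $U_N$ maps an orthonormal basis onto the canonical orthonormal basis $\{\phi_\mathfrak{m}\}$ of $\ell^2(\mathfrak{M}_N)$ and is unitary. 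It then suffices to compute $H_\cG^N\psi_\mathfrak{m}$, and since each two--site term $h_{xy}$ acts as the identity off the tensor factors $\cH_x\otimes\cH_y$, the whole computation reduces to the action of a single $h_{xy}$ on a vector of the form $e_j\otimes e_k$.

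Next I would perform that two--site computation in the relabelled basis. Using $e_k=\delta_{J-k}$ and rewriting the radicands $J(J+1)-m(m\pm 1)$ with $m=J-k$ and $M=2J$, one obtains the clean formulas $S^+e_j=\sqrt{j(M+1-j)}\,e_{j-1}$ and $S^-e_j=\sqrt{(j+1)(M-j)}\,e_{j+1}$, with the boundary cases $j=0$ (resp.\ $j=M$) producing $0$, together with $S^3e_j=(J-j)e_j$. Consequently $S_x^+S_y^-(e_j\otimes e_k)=\sqrt{j(M+1-j)}\,\sqrt{(k+1)(M-k)}\;e_{j-1}\otimes e_{k+1}$, which is exactly the data of the configuration $\mathfrak{n}$ obtained from $\mathfrak{m}$ by moving one particle across the edge $\{x,y\}$ (so that $x_+=x$ and $x_-=y$ in the notation of $\cE^{M,N}$), while the term $J^2-S_x^3S_y^3$ acts as multiplication by the scalar $J^2-(J-j)(J-k)$.

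The heart of the proof is then two short algebraic identities. First, $J^2-(J-j)(J-k)=\tfrac{M}{2}(j+k)-jk=v(j,k)$, so that summing $J^2-S_x^3S_y^3$ over the edges of $\cG$ yields precisely the multiplication operator $V_N$. Second, the matching of weights: substituting $\mathfrak{m}(x_+)=j$, $\mathfrak{n}(x_+)=j-1$, $\mathfrak{m}(x_-)=k$, $\mathfrak{n}(x_-)=k+1$ into the definition of $w(\mathfrak{m},\mathfrak{n})$ gives $\tfrac{M}{2}(2j)-j(j-1)=j(M+1-j)$ and $\tfrac{M}{2}(2k+2)-k(k+1)=(k+1)(M-k)$, so $w(\mathfrak{m},\mathfrak{n})=\sqrt{j(M+1-j)}\,\sqrt{(k+1)(M-k)}$ agrees with the coefficient produced by $S_x^+S_y^-$. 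Feeding these identities back, $-\tfrac{1}{2\Delta}(S_x^+S_y^-+S_x^-S_y^+)$ summed over the adjacent pairs of $\cG$ becomes, under $U_N$, exactly $-\tfrac{1}{2\Delta}A_N$: the two orderings $S_x^+S_y^-$ and $S_x^-S_y^+$ together with the symmetry $w(\mathfrak{m},\mathfrak{n})=w(\mathfrak{n},\mathfrak{m})$ account for both directions of every edge of $\cG^{M,N}$, and the degenerate cases where a spin operator annihilates correspond precisely to the absence of such an edge (one cannot move a particle off an empty site or onto a full one). This establishes $U_NH_\cG^NU_N^*=-\tfrac{1}{2\Delta}A_N+V_N$, and the second displayed form is then immediate from $-\cL_N=D_N-A_N$ in \eqref{eq:laplace}.

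I expect the main obstacle to be bookkeeping rather than anything conceptual: keeping the particle--moving maps, the labels $x_\pm$, and the edge--summation convention (ordered versus unordered adjacent pairs, and the corresponding factors of two) consistent throughout, and carefully confirming that every degenerate case on the spin side lines up with the combinatorics encoded in $\cE^{M,N}$, so that no spurious terms survive and nothing is double counted.
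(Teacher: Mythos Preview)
Your proposal is correct and follows essentially the same approach as the paper's own proof: compute on the basis $\{\psi_\mathfrak{m}\}$, identify the diagonal part $J^2-S_x^3S_y^3$ with $v(\mathfrak{m}(x),\mathfrak{m}(y))$, match the coefficient of $S_x^+S_y^-$ with the weight $w(\mathfrak{m},\mathfrak{n})$, and handle the degenerate cases separately. Your explicit formulas $S^+e_j=\sqrt{j(M+1-j)}\,e_{j-1}$ and $S^-e_j=\sqrt{(j+1)(M-j)}\,e_{j+1}$, and the verifications that the radicands agree with the two factors of $w$, are exactly the computations the paper carries out (in slightly different notation).
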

\begin{proof} For any $\mathfrak{m}\in\mathfrak{M}_N$, write $\psi_\mathfrak{m}$ as $\psi_\mathfrak{m}=\bigotimes_{x\in \cV}(e_{\mathfrak{m}(x)})_x$. For any $x,y\in\cV$ we get
	\begin{align}
		&\left(J^2-S^3_xS^3_y\right)(e_{\mathfrak{m}(x)})_x\otimes(e_{\mathfrak{m}(y)})_y=\left(J^2-(J-\mathfrak{m}(x))\cdot(J-\mathfrak{m}(y))\right)(e_{\mathfrak{m}(x)})_x\otimes(e_{\mathfrak{m}(y)})_y\\=&\left(J\left(\mathfrak{m}(x)+\mathfrak{m}(y)\right)
		-\mathfrak{m}(x)\mathfrak{m}(y)\right)(e_{\mathfrak{m}(x)})_x\otimes(e_{\mathfrak{m}(y)})_y=v(\mathfrak{m}(x),\mathfrak{m}(y))(e_{\mathfrak{m}(x)})_x\otimes(e_{\mathfrak{m}(y)})_y	\end{align}
	Consequently, we get
	\begin{equation}
	\left(\sum_{\{x,y\}\in\cE}(J^2-S_x^3S_y^3)\right)\psi_\mathfrak{m}=\left(\sum_{\{x,y\}\in\cE}v(\mathfrak{m}(x),\mathfrak{m}(y))\right)\psi_\mathfrak{m}\:,
	\end{equation}
	from which 
	\begin{equation} \label{eq:potential}
	U_N\left(\sum_{\{x,y\}\in\cE}(J^2-S_x^3S_y^3)\right)U_N^*=V_N
	\end{equation}
	follows.
	
	Next, let us consider
	\begin{equation}
	S^+_xS^-_y(e_{\mathfrak{m}(x)})_x\otimes(e_{\mathfrak{m}(y)})_y\:,
	\end{equation}
	where we have to distinguish two cases:
	\begin{itemize}
		\item Case I: $\mathfrak{m}(x)\neq 0$ and $\mathfrak{m}(y)\neq M$. In this case, by a direct calculation, it can be verified that
		\begin{align}
			&S^+_xS^-_y(e_{\mathfrak{m}(x)})_x\otimes(e_{\mathfrak{m}(y)})_y\\=&\left(M\mathfrak{m}(x)-(\mathfrak{m}(x)-1)\mathfrak{m}(x)\right)^{1/2}\cdot\left(M(\mathfrak{m}(y)+1)-(\mathfrak{m}(y)+1)\mathfrak{m}(y)\right)^{1/2}(e_{\mathfrak{m}(x)-1})_x\otimes(e_{\mathfrak{m}(y)+1})_y\:.\label{eq:spinstuff}
		\end{align}
		Now, define the function $\mathfrak{n}_{xy}\in\mathfrak{M}_N$ via $\mathfrak{n}_{xy}(x):=\mathfrak{m}(x)-1$, $\mathfrak{n}_{xy}(y):=\mathfrak{m}(y)+1$, and $\mathfrak{n}_{xy}(z):=\mathfrak{m}(z)$ for any $z\in\cV\setminus\{x,y\}$. With this, we can rewrite \eqref{eq:spinstuff} and get
		\begin{align}
			&S^+_xS^-_y(e_{\mathfrak{m}(x)})_x\otimes(e_{\mathfrak{m}(y)})_y\\
			=&\left(\frac{M}{2}\left(\mathfrak{m}(x)+\mathfrak{n}_{xy}(x)+1\right)-\mathfrak{m}(x)\mathfrak{n}_{xy}(x)\right)^{1/2}\left(\frac{M}{2}\left(\mathfrak{m}(y)+\mathfrak{n}_{xy}(y)+1\right)-\mathfrak{m}(y)\mathfrak{n}_{xy}(y)\right)^{1/2}(e_{\mathfrak{n}_{xy}(x)})_x\otimes(e_{\mathfrak{n}_{xy}(y)})_y\:.
		\end{align}
		This shows that for any $\{x,y\}\in\cE$ and for any $\mathfrak{m}\in\mathfrak{M}_N$ such that $\mathfrak{m}(x)\neq M$ and $\mathfrak{m}(y)\neq 0$, we get
		\begin{equation}
		S^+_xS^-_y\psi_\mathfrak{m}=w(\mathfrak{m},\mathfrak{n}_{xy})\cdot\psi_{\mathfrak{n}_{xy}}\:.
		\end{equation}
		\item Case II: $\mathfrak{m}(x)=0$ or $\mathfrak{m}(y)=M$. In this case, it can easily be seen that
		\begin{equation}
		S^+_xS^-_y(e_{\mathfrak{m}(x)})_x\otimes(e_{\mathfrak{m}(y)})_y=0\:.
		\end{equation}
	\end{itemize}
	
	Defining
	\begin{equation}
	\mathcal{O}(\mathfrak{m}):=\{(x,y)\in\cV\times\cV: \{x,y\}\in\cE, \mathfrak{m}(x)\neq 0 \mbox{ and } \mathfrak{m}(y)\neq M\}\:,
	\end{equation}
	we obtain
	\begin{align} \label{eq:edges}
		\sum_{\{x,y\}\in\cE}(S_x^+S_y^-+S_x^-S_y^+)\psi_\mathfrak{m}=\sum_{(x,y)\in\mathcal{O}(\mathfrak{m})} w(\mathfrak{m},\mathfrak{n}_{xy})\psi_{\mathfrak{n}_{xy}}
	\end{align}
	Now, observe that for any $\mathfrak{m}\in\cV^{M,N}$ we have
	\begin{equation}
	\{\mathfrak{n}\in\cV^{M,N}:\mathfrak{n}\sim\mathfrak{m}\}=\{\mathfrak{n}_{xy}: (x,y)\in\mathcal{O}(\mathfrak{m})\}
	\end{equation}
	and thus the calculation in \eqref{eq:edges} can be continued to get
	\begin{equation}
	\sum_{\{x,y\}\in\cE}(S_x^+S_y^-+S_x^-S_y^+)\psi_\mathfrak{m}=\sum_{(x,y)\in\mathcal{O}(\mathfrak{m})} w(\mathfrak{m},\mathfrak{n}_{xy})\psi_{\mathfrak{n}_{xy}}=\sum_{\mathfrak{n}:\mathfrak{n}\sim\mathfrak{m}}w(\mathfrak{m,n})\psi_\mathfrak{n}\:.
	\end{equation}
	This implies that
	\begin{equation} \label{eq:kinetic}
	U_N\left(\sum_{\{x,y\}\in\cE}(S_x^+S_y^-+S_x^-S_y^+)\right)U_N^*=A_N\:.
	\end{equation}
	Collecting \eqref{eq:potential} and \eqref{eq:kinetic}, we therefore conclude
	\begin{equation}
	U_NH_\cG^NU_N^*=-\frac{1}{2\Delta}A_N+V_N\overset{\eqref{eq:laplace}}{=}-\frac{1}{2\Delta}\cL_N+V_N-\frac{1}{2\Delta}D_N\:,
	\end{equation}
	which is the desired result.
\end{proof}
For later purposes, the following lemma will be useful:
\begin{lemma} \label{lemma:relbound1}
	We have the following estimates
	\begin{equation} \label{eq:estimate}
	-\frac{2J}{\Delta}\left(J^2-S_x^3S_y^3\right)\leq -\frac{1}{2\Delta}\left(S_x^+S_y^-+S_x^-S_y^+\right)\leq \frac{2J}{\Delta}\left(J^2-S_x^3S_y^3\right)
	\end{equation}
	and 
	\begin{equation} \label{eq:estimate2}
	\left(1-\frac{2J}{\Delta}\right)\left(J^2-S_x^3S_y^3\right)\leq h_{xy}\leq \left(1+\frac{2J}{\Delta}\right)\left(J^2-S_x^3S_y^3\right)\leq \frac{M}{2}\left(1+\frac{M}{\Delta}\right)\left(\mathcal{N}^{loc}_x+\mathcal{N}^{loc}_y\right)\:.
	\end{equation}
\end{lemma}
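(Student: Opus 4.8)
Since $h_{xy}$ and all the operators compared with it act as the identity on every tensor factor except $\cH_x$ and $\cH_y$, the plan is to prove everything on the two--site space $\cH_x\otimes\cH_y\cong\C^{M+1}\otimes\C^{M+1}$, where in the product basis $\{e_k\otimes e_l\}_{k,l=0}^{M}$ one has $S^3e_k=(J-k)e_k$, $\cN^{loc}e_k=ke_k$, $S^+e_k=\sqrt{k(M-k+1)}\,e_{k-1}$ and $S^-e_k=\sqrt{(k+1)(M-k)}\,e_{k+1}$ (convention $e_{-1}=e_{M+1}=0$). Then $J^2-S_x^3S_y^3$ is multiplication by $v(k,l):=J^2-(J-k)(J-l)=\tfrac M2(k+l)-kl$ on $e_k\otimes e_l$ (the identity already used in the proof of Proposition~\ref{prop:equivalence}) and $\cN^{loc}_x+\cN^{loc}_y$ is multiplication by $k+l$. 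First I would dispose of the cheap parts: $kl\ge0$ gives $v(k,l)\le\tfrac M2(k+l)$, which after multiplying by the positive number $1+\tfrac{2J}\Delta=1+\tfrac M\Delta$ is the last inequality in \eqref{eq:estimate2}; and since $h_{xy}=(J^2-S_x^3S_y^3)-\tfrac1{2\Delta}(S_x^+S_y^-+S_x^-S_y^+)$, adding $J^2-S_x^3S_y^3$ to the three members of \eqref{eq:estimate} produces the first two inequalities of \eqref{eq:estimate2}. So the lemma reduces to \eqref{eq:estimate}, i.e., after multiplying by $2\Delta>0$, to $\pm(S_x^+S_y^-+S_x^-S_y^+)\le 4J(J^2-S_x^3S_y^3)$.

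For this last inequality I would argue with quadratic forms. Put $B:=S_x^+S_y^-$, so $B^*=S_x^-S_y^+$; the formulas above give $B(e_k\otimes e_l)=w_1(k,l)\,e_{k-1}\otimes e_{l+1}$ and $B^*(e_k\otimes e_l)=w_2(k,l)\,e_{k+1}\otimes e_{l-1}$, with $w_1(k,l)=\sqrt{k(M-k+1)(l+1)(M-l)}$ and $w_2(k,l)=\sqrt{(k+1)(M-k)l(M-l+1)}$, and one checks $w_1(k+1,l-1)=w_2(k,l)$, so that a single ``hopping weight'' is attached to each unordered pair of basis vectors linked by $B+B^*$. Writing $\Phi=\sum_{k,l}c_{kl}\,e_k\otimes e_l$, expanding $\langle\Phi,(B+B^*)\Phi\rangle$, estimating each cross term by $w\,|c_\alpha||c_\beta|\le\tfrac w2(|c_\alpha|^2+|c_\beta|^2)$ and using the weight symmetry yields
\[
\bigl|\langle\Phi,(B+B^*)\Phi\rangle\bigr|\ \le\ \sum_{k,l}\bigl(w_1(k,l)+w_2(k,l)\bigr)\,|c_{kl}|^2 .
\]
Thus it suffices to prove the scalar estimate $w_1(k,l)+w_2(k,l)\le 4J\,v(k,l)$ for all $k,l\in\{0,1,\dots,M\}$.

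This scalar estimate is where I expect the genuine difficulty, and getting the right pairing for the arithmetic--geometric mean inequality is the main obstacle: the naive groupings of the four factors of $w_1,w_2$ are too lossy and produce a false polynomial inequality, whereas the grouping coming from the spin operators themselves, $w_1(k,l)=\|S^+e_k\|\,\|S^-e_l\|$ and $w_2(k,l)=\|S^-e_k\|\,\|S^+e_l\|$, works. Two applications of AM--GM then give
\[
w_1(k,l)+w_2(k,l)\ \le\ \tfrac12\bigl(\|S^+e_k\|^2+\|S^-e_k\|^2\bigr)+\tfrac12\bigl(\|S^+e_l\|^2+\|S^-e_l\|^2\bigr)\ =\ k(M-k)+l(M-l)+M ,
\]
using $\|S^+e_k\|^2+\|S^-e_k\|^2=k(M-k+1)+(k+1)(M-k)=2k(M-k)+M$. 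It then remains to check $k(M-k)+l(M-l)+M\le 2M\,v(k,l)$, which, using $2v(k,l)=M(k+l)-2kl$, rearranges into the elementary inequality $(k-l)^2+2(M-1)\,v(k,l)-M\ge 0$. I would verify this by cases: if $1\le k,l\le M-1$ then $v(k,l)=\tfrac12(k(M-l)+l(M-k))\ge 1$, so the left side is $\ge 2(M-1)-M=M-2\ge 0$ (this range is empty when $M=1$); if instead some coordinate lies in $\{0,M\}$, then the symmetries $(k,l)\mapsto(l,k)$ and $(k,l)\mapsto(M-k,M-l)$, which both fix $v(k,l)$ and $w_1+w_2$, let one assume $k=0$, where $w_1=0$, $w_2=\sqrt{Ml(M-l+1)}$ and $2M\,v(0,l)=M^2l$, so the claim is $Ml(M-l+1)\le M^4l^2$, which holds since $M-l+1\le M\le M^3l$ for $l\ge 1$ (and all terms vanish when $l=0$). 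Feeding this back into the quadratic--form bound gives $\pm(S_x^+S_y^-+S_x^-S_y^+)\le 4J(J^2-S_x^3S_y^3)$ and hence the whole lemma. (If Remark~\ref{rem:nonnegativity} is available independently there is a shortcut for \eqref{eq:estimate}: conjugation by the unitary $e^{i\pi\cN^{loc}_x}$ reverses the sign of the hopping term, so $h_{xy}$ and $(J^2-S_x^3S_y^3)+\tfrac1{2\Delta}(S_x^+S_y^-+S_x^-S_y^+)$ are both $\ge0$ for every $\Delta>M$, and letting $\Delta\downarrow M$ gives $\pm(S_x^+S_y^-+S_x^-S_y^+)\le 2M(J^2-S_x^3S_y^3)$ directly.)
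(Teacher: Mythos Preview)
Your argument is correct. For \eqref{eq:estimate2} your derivation coincides with the paper's: add $J^2-S_x^3S_y^3$ to \eqref{eq:estimate} for the first two inequalities, and use $J^2-S_x^3S_y^3=J(\cN^{loc}_x+\cN^{loc}_y)-\cN^{loc}_x\cN^{loc}_y\le J(\cN^{loc}_x+\cN^{loc}_y)$ for the third. For \eqref{eq:estimate} the paper gives no argument at all---it simply cites \cite[Proof of Thm.~5.3]{MNSS}---so your quadratic-form reduction to the scalar inequality $w_1(k,l)+w_2(k,l)\le 2M\,v(k,l)$, followed by the AM--GM grouping via $\|S^\pm e_k\|$ and the interior/boundary case split, is a genuine self-contained proof that the paper defers to an external reference. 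One small expository wrinkle: you announce that you will ``verify this by cases'', where ``this'' is the intermediate inequality $(k-l)^2+2(M-1)v(k,l)-M\ge 0$, but in the boundary case you (rightly) switch to checking $w_1+w_2\le 2Mv$ directly; this is essential, since the intermediate inequality actually fails at $(k,l)=(0,0)$ (it equals $-M$) even though both sides of the target estimate vanish there, so it would be worth saying explicitly that the intermediate bound is only used for $1\le k,l\le M-1$. Your parenthetical shortcut via conjugation by $e^{i\pi\cN^{loc}_x}$ would indeed be circular in this paper, since Remark~\ref{rem:nonnegativity} invokes the present lemma.
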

\begin{proof}
	The proof of \eqref{eq:estimate} can be found in \cite[Proof of Thm.\ 5.3]{MNSS}. The first two inequalities in \eqref{eq:estimate2} are an immediate consequence of \eqref{eq:estimate}, while the third inequality can be seen from 
	\begin{equation} \label{eq:esttwosite}
	J^2-S_x^3S_y^3=J^2-(J-\mathcal{N}^{loc}_x)(J-\mathcal{N}^{loc}_y)=J\left(\mathcal{N}_x^{loc}+\mathcal{N}_y^{loc}\right)-\mathcal{N}_x^{loc}\mathcal{N}_y^{loc}\leq J\left(\mathcal{N}_x^{loc}+\mathcal{N}_y^{loc}\right)\:.
	\end{equation}
	(Recall that $M=2J$.)
\end{proof}
\begin{remark} \label{rem:nonnegativity}
	Observe that the operator $(J^2-S_x^3S_y^3)$ is non--negative with 
	\begin{equation}
	\ker(J^2-S_x^3S_y^3)=\mbox{span}\{e_0\otimes e_0, e_M\otimes e_M\}\:.
	\end{equation}
	 Then, from \eqref{eq:esttwosite}, the lower bound
	\begin{equation} \label{eq:boundbelow}
	\left(1-\frac{2J}{\Delta}\right)\left(J^2-S_x^3S_y^3\right)\leq h_{xy}\:,
	\end{equation}
	implies that if $\Delta>M$, the operator $h_{xy}$ is non--negative.\footnote{Note that we are not claiming that $\Delta>M$ is necessary for $h_{xy}$ to be non--negative.} By an explicit calculation, it is not hard to see that $\mbox{span}\{e_0\otimes e_0, e_M\otimes e_M\}\subset\ker(h_{xy})$ and thus -- since by \eqref{eq:boundbelow} we have $\ker(h_{xy})\subset\ker(J^2-S_x^3S_y^3)$, this actually shows that $\ker(h_{xy})=\mbox{span}\{e_0\otimes e_0, e_M\otimes e_M\}$.
\end{remark}
\begin{remark}
Of course, it is also possible to add an external magnetic field $W$ of the form
\begin{equation}
W=\sum_{x\in \cV}W_x\cN^{loc}_x
\end{equation}
to $H_\cG^N$, with $W_x$ being the strength of the magnetic field at the site $x$. It is not hard to see that for any $f\in\ell^2(\mathfrak{M}_N)$:
\begin{equation}
\Big(U_NWU_N^*f\Big)(\mathfrak{m})=\left(\sum_{x\in\cV}W_x\mathfrak{m}(x)\right)f(\mathfrak{m})\:.
\end{equation}
\end{remark}
\subsection{Definition of the Hamiltonian on infinite graphs}
Following \cite{FS18}, let us now introduce the spin--$J$ XXZ Hamiltonian for infinite graphs. To this end, we introduce the Hilbert space
\begin{equation}
\cH_\cG=\ell^2(\mathfrak{M})=\left\{f:\mathfrak{M}\rightarrow\C, \sum_{\mathfrak{m}\in\mathfrak{M}}|f(\mathfrak{m})|^2<\infty\right\}\:,
	\end{equation}
equipped with the inner product
\begin{equation}
\langle f,g\rangle=\sum_{\mathfrak{m}\in\mathfrak{M}}\overline{f(\mathfrak{m})}g(\mathfrak{m})\:.
\end{equation}
Its canonical basis $\left\{\phi_\mathfrak{m}\right\}_{\mathfrak{m}\in\mathfrak{M}}$ is defined as in \eqref{eq:canbasis}.
Likewise, we define the $N$--particle subspaces $\cH_\cG^N:=\ell^2(\mathfrak{M}_N)$.
By Remark \ref{rem:infiniteprep}, the operators $H_\cG^N$ given by
\begin{equation}
H_\cG^N=-\frac{1}{2\Delta}A_N+V_N
\end{equation}
on $\cH_\cG^N=\ell^2(\mathfrak{M}_N)$ are well--defined.
The full XXZ--Hamiltonian $H_\cG$ on $\cH_\cG$ is then given by taking the direct sum over $H_\cG^N$:
\begin{equation}
H_\cG:=\bigoplus_{N\geq 0}H_\cG^N\quad\mbox{with domain}\quad \ell^2(\mathfrak{M})=\bigoplus_{N\geq 0}\ell^2(\mathfrak{M}_N)\:.
\end{equation}
In view of Proposition \ref{prop:equivalence}, the operators $H_\cG^N$ are related to the formal infinite sum $\sum_{x\sim y}h_{xy}$ in the following way: We interpret any basis vector $\phi_\mathfrak{m}\in\ell^2(\mathfrak{M}_N)$ as a configuration $\psi_\mathfrak{m}$ of $N$ particles with each site $x\in\cV$ being occupied by $\mathfrak{m}(x)$ of these $N$ particles. Since we have the constraint $\sum_{x\in\cV}\mathfrak{m}(x)=N$, this implies $\mathfrak{m}(x)\neq 0$ for only finitely many sites $x$. Since -- as an operator on $\C^{M+1}\otimes\C^{M+1}$ -- we have $h_{xy}(e_0)_x\otimes(e_0)_y=0$, this implies that $\sum_{x\sim y}h_{xy}$ contributes only finitely many non--zero terms when acting on $\psi_\mathfrak{m}$. This allows us to extend the meaning of $\sum_{x\sim y}h_{xy}$ to finite linear combinations of basis vectors: 
To this end, for any $N\in\{0,1,2,\dots\}$, let $\ell_0^2(\mathfrak{M}_N)$, be the space of finite linear combinations of canonical $N$--particle configurations:
\begin{equation}
\ell_0^2(\mathfrak{M}_N):=\mbox{span}\{\psi_\mathfrak{m}:\mathfrak{m}\in\mathfrak{M}_N\}\:.
\end{equation} 
Now, for any $\phi\in\ell_0^2(\mathfrak{M})$ we get $\cN_x^{loc}\phi\neq 0$ for only finitely many $x\in\cV$, and thus the total particle number operator $\cN$ -- formally given by $\cN=\sum_{x\in\cV}\cN_x^{loc}$ -- is also well--defined on $\ell_0^2(\mathfrak{M}_N)$ and yields
\begin{equation}
\cN\phi=\sum_{x\in\cV}\cN_{x}^{loc}\phi=N\phi\:.
\end{equation}
By \eqref{eq:estimate2}, for any element $\phi\in\ell^2_0(\mathfrak{M}_N)$, we get
\begin{equation}
\langle \phi,h_{xy}\phi\rangle\leq \frac{M}{2}\left(1+\frac{M}{\Delta}\right)\left\langle \phi,\left(\cN_x^{loc}+\cN_y^{loc}\right)\phi\right\rangle
\end{equation}
and therefore
\begin{equation}
0\leq\langle \phi,H_\cG^N\phi\rangle=\left\langle\phi,\sum_{x\sim y}h_{xy}\phi\right\rangle\leq \frac{M}{2}\left(1+\frac{M}{\Delta}\right)\left\langle\phi,\sum_{x\sim y}(\cN_x^{loc}+\cN_y^{loc})\phi\right\rangle\leq \frac{MNd_{max}}{2}\left(1+\frac{M}{\Delta}\right)\|\phi\|^2\:,
\end{equation}
where $d_{max}$ is the maximal degree of $\cG$. This shows that $H_\cG^N$ -- so far only defined on elements of the dense set $\ell_0^2(\mathfrak{M}_N)$ -- has a unique bounded self--adjoint extension to the entire Hilbert space $\ell^2(\mathfrak{M}_N)$, which we will denote by the same symbol and which satisfies
\begin{equation}
\left\|H_\cG^N\right\|\leq \frac{MNd_{max}}{2}\left(1+\frac{M}{\Delta}\right)\:.
\end{equation}
The full XXZ Hamiltonian $H_\cG$ on $\cG$ is then the self--adjoint operator defined as the direct sum over all particle numbers
\begin{equation}
H_\cG=\bigoplus_{N\geq 0}H_\cG^N\quad\mbox{with domain}\quad \ell^2(\mathfrak{M})=\bigoplus_{N\geq 0}\ell^2(\mathfrak{M}_N)\:.
\end{equation}

Note that if $\cG$ is an infinite graph, the resulting operator $H_\cG$ will be unbounded:
\begin{proposition}
	If $\cG$ is an infinite graph, then $H_\cG$ is unbounded.
\end{proposition}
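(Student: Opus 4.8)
The plan is to exhibit, for every $N\ge 1$, a unit vector on which the quadratic form of $H_\cG$ takes a value $\gtrsim N$; since $H_\cG$ is self--adjoint this is incompatible with boundedness. Equivalently, I will show that the $N$--particle block $H_\cG^N$ — which is the part of $H_\cG$ acting in the reducing subspace $\ell^2(\mathfrak{M}_N)\subset\ell^2(\mathfrak{M})$ — satisfies $\|H_\cG^N\|\ge JN$, so that $\|H_\cG\|\ge\sup_N\|H_\cG^N\|=\infty$.

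The states I would use are the ``maximally spread out'' $N$--particle configurations. Since $\cG$ is connected, infinite, and (as assumed throughout) of bounded maximal degree, for each $N$ one can choose $N$ pairwise non--adjacent vertices $x_1,\dots,x_N\in\cV$; set $\mathfrak{m}_N(x_i)=1$ for $1\le i\le N$ and $\mathfrak{m}_N(z)=0$ otherwise, so $\mathfrak{m}_N\in\mathfrak{M}_N$. Take $\phi^{(N)}:=\phi_{\mathfrak{m}_N}$, a unit vector lying entirely in the sector $\ell^2(\mathfrak{M}_N)$, on which $H_\cG$ acts as $H_\cG^N=-\frac{1}{2\Delta}A_N+V_N$. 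Now $A_N$ has vanishing diagonal (no configuration is adjacent to itself in $\cG^{M,N}$, since that would require $\mathfrak{m}(x_+)=\mathfrak{m}(x_+)+1$), so $\langle\phi^{(N)},A_N\phi^{(N)}\rangle=0$, while $V_N$ is a multiplication operator; hence
$$\big\langle\phi^{(N)},H_\cG^N\phi^{(N)}\big\rangle=\big\langle\phi^{(N)},V_N\phi^{(N)}\big\rangle=\sum_{\{x,y\}\in\cE}v\big(\mathfrak{m}_N(x),\mathfrak{m}_N(y)\big).$$
Because no edge of $\cG$ joins two of the $x_i$, we have $\mathfrak{m}_N(x)\mathfrak{m}_N(y)=0$ on every edge, so $v(\mathfrak{m}_N(x),\mathfrak{m}_N(y))=\frac{M}{2}\big(\mathfrak{m}_N(x)+\mathfrak{m}_N(y)\big)\ge 0$ for all $\{x,y\}\in\cE$; the sum is finite (only edges meeting $\{x_1,\dots,x_N\}$ contribute), and exchanging the order of summation gives $\sum_{\{x,y\}\in\cE}v(\mathfrak{m}_N(x),\mathfrak{m}_N(y))=\frac{M}{2}\sum_{x\in\cV}\deg_\cG(x)\,\mathfrak{m}_N(x)=\frac{M}{2}\sum_{i=1}^N\deg_\cG(x_i)$. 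Since $\cG$ is connected with more than one vertex, each $\deg_\cG(x_i)\ge 1$, so $\langle\phi^{(N)},H_\cG^N\phi^{(N)}\rangle\ge\frac{M}{2}N=JN$, and therefore $\|H_\cG^N\|\ge JN$.

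The only step that is not a direct computation is the choice of $N$ pairwise non--adjacent vertices, and I expect this to be the only point needing any argument: by local finiteness the set of vertices equal or adjacent to any of $x_1,\dots,x_k$ is finite, so having chosen $x_1,\dots,x_k$ the set of vertices that are neither equal nor adjacent to any of them is cofinite in the infinite set $\cV$, hence nonempty, and any of its elements serves as $x_{k+1}$; iterating $N$ times yields the desired independent set. (Alternatively, bounded degree $d$ gives chromatic number $\le d+1$ and hence an independent set of size $|\cV|/(d+1)=\infty$.) Letting $N\to\infty$ in $\|H_\cG^N\|\ge JN$ then shows $H_\cG$ is unbounded. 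Intuitively: placing $N$ particles far apart on an infinite graph costs interaction (potential) energy proportional to $N$, while the hopping term $-\frac{1}{2\Delta}A_N$ contributes nothing to the diagonal, so the energies of such states are unbounded in $N$.
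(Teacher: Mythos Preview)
Your proof is correct and follows essentially the same strategy as the paper: for each $N$, exhibit a basis vector $\phi_{\mathfrak m}$ with $N$ singly occupied, well--separated sites and show its energy is at least $NM/2=JN$. The only technical difference is that the paper selects $N$ pairwise disjoint \emph{edges} $\{x_i,y_i\}$ and bounds $H_\cG^N\ge\sum_i h_{x_iy_i}$ via non--negativity of the two--site terms before evaluating on the test state, whereas you select an independent set of $N$ \emph{vertices} and use that $A_N$ has zero diagonal to compute $\langle\phi_{\mathfrak m},H_\cG^N\phi_{\mathfrak m}\rangle=V_N(\mathfrak m)$ directly; both routes land on the same lower bound.
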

\begin{proof}
	
This can be easily seen from the following argument: for any $N$, take two sets $\{x_i\}_{i=1}^N, \{y_i\}_{i=1}^N$ such that $\{x_i,y_i\}\in\cE$ for any $i$ such that for any $i\neq j$, we have $\{x_i,y_i\}\cap\{x_j,y_j\}=\emptyset$. Since the two--site operators $h_{xy}$ are non--negative, we then get
\begin{equation} \label{eq:lowbound}
H_{00}^N:=\sum_{i=1}^Nh_{x_iy_i}\leq \sum_{\{x,y\}\in\cE}h_{xy}=H_\cG^N\:.
\end{equation}
\begin{remark}
As before, this statement has to be understood only on $\ell^2_0(\mathfrak{M}_N)$ and can then be extended by continuity.
\end{remark}
Now, observe that 
\begin{equation}
\langle (e_0)_x\otimes (e_1)_y,h_{xy}(e_0)_x\otimes (e_1)_y\rangle=\frac{M}{2}\:.
\end{equation}
We now take $\widehat{\mathfrak{m}}\in\mathfrak{M}_N$ given by
\begin{equation}
\widehat{\mathfrak{m}}(x)=\begin{cases} 1\quad&\mbox{if}\quad x=x_i\\ 0 &\mbox{else}\end{cases}
\end{equation}
and finish the proof by observing that
\begin{equation} 
\langle \psi_{\widehat{\mathfrak{m}}}, H_{00}^N\psi_{\widehat{\mathfrak{m}}}\rangle=\frac{NM}{2}\:,
\end{equation}
which by \eqref{eq:lowbound} is a lower bound for $\|H_\cG^N\|$.
\end{proof}
For later purposes, the following lemma will also be useful:
\begin{lemma} \label{lemma:relbound2}
For any $M,N$, we have the following estimate:
\begin{equation} \label{eq:firstbound}
A_N\leq 2MV_N\:.
\end{equation}
Moreover, let $\cV'\subset\mathcal{V}^{M,N}$ and let $\cG'=(\cV',\cE')$ be the subgraph of $\cG^{M,N}$ induced by $\cV'$, i.e. $\cE'=\{\{\mathfrak{m,n}\}\in\cE^{M,N}: \mathfrak{m,n}\in\cV'\}$. Letting $A_N'$ be the weighted adjacency operator on $\cG'$, we get
\begin{equation} \label{eq:secondbound}
A_N'\leq 2MV_N\upharpoonright_{\ell^2(\cV')}\:.
\end{equation}
\end{lemma}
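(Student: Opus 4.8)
The plan is to read off both estimates from the two--site operator inequality already contained in Lemma~\ref{lemma:relbound1}, combined with the unitary equivalence of Proposition~\ref{prop:equivalence}; no direct computation with the weights $w(\mathfrak{m,n})$ will be needed. First I would take the left--hand inequality in \eqref{eq:estimate}, multiply it through by $-2\Delta<0$, and use $2J=M$ to obtain, as operators on $\C^{M+1}\otimes\C^{M+1}$ (and hence on $\cH_\cG$, tensoring with the identity on $\cV\setminus\{x,y\}$),
\[
S_x^+S_y^-+S_x^-S_y^+\ \leq\ 2M\left(J^2-S_x^3S_y^3\right)\qquad\text{for all }\{x,y\}\in\cE\,.
\]
Both sides are self--adjoint (recall $(S^+)^*=S^-$, and $S^3$ is diagonal) and commute with the total particle number operator, hence reduce every subspace $\cH_\cG^N$; summing over all edges gives
\[
\sum_{\{x,y\}\in\cE}\left(S_x^+S_y^-+S_x^-S_y^+\right)\ \leq\ 2M\sum_{\{x,y\}\in\cE}\left(J^2-S_x^3S_y^3\right)
\]
on $\cH_\cG^N$. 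For an infinite $\cG$ I would run this argument on the dense domain $\ell_0^2(\mathfrak{M}_N)$ --- where each vector is affected by only finitely many of the (configuration--wise) summands, since every $\psi_\mathfrak{m}$ has finite support and $h_{xy}$ kills the local vacuum --- and then extend by continuity, using that $A_N$ and $V_N$ are bounded for graphs of bounded degree.

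Next I would conjugate this inequality by the unitary $U_N$; since conjugation by a unitary preserves the order of self--adjoint operators, and since by \eqref{eq:kinetic} resp.\ \eqref{eq:potential} the left--hand side above is carried to $A_N$ and the right--hand side to $2MV_N$, this is exactly the claimed bound $A_N\leq 2MV_N$ in \eqref{eq:firstbound}.

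The subgraph bound \eqref{eq:secondbound} then follows by restricting the associated quadratic form to a subspace. Given $f\in\ell^2(\cV')$, let $\tilde f\in\ell^2(\mathfrak{M}_N)$ be its extension by zero off $\cV'$. Because $\cG'$ is the subgraph \emph{induced} by $\cV'$, the edge set $\cE'$ consists of exactly those edges of $\cG^{M,N}$ both of whose endpoints lie in $\cV'$, so that $\langle f,A_N'f\rangle=\langle\tilde f,A_N\tilde f\rangle$; and since $V_N$ is a multiplication operator, $\langle f,(V_N\upharpoonright_{\ell^2(\cV')})f\rangle=\langle\tilde f,V_N\tilde f\rangle$. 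Inserting $\tilde f$ into \eqref{eq:firstbound} now gives $\langle f,A_N'f\rangle\leq 2M\langle f,(V_N\upharpoonright_{\ell^2(\cV')})f\rangle$, i.e.\ \eqref{eq:secondbound}.

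I do not anticipate a genuine obstacle; the only point requiring some attention is the infinite--graph bookkeeping in the first step, namely verifying that every sum appearing there acts within, and is finite on, $\ell_0^2(\mathfrak{M}_N)$ before one passes to the continuous extension. A self--contained alternative to the first part, avoiding Lemma~\ref{lemma:relbound1}, would be to start from $2\RE(\overline{f(\mathfrak{m})}f(\mathfrak{n}))\leq|f(\mathfrak{m})|^2+|f(\mathfrak{n})|^2$ (which at once yields $A_N\leq D_N$) and then check the elementary pointwise inequality $\sqrt{p(M-p+1)(q+1)(M-q)}+\sqrt{q(M-q+1)(p+1)(M-p)}\leq M\left(M(p+q)-2pq\right)$ for all $p,q\in\{0,1,\dots,M\}$ in order to conclude $D_N\leq 2MV_N$; this works but needs a short case distinction and is more laborious than the operator argument.
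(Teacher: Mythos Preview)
Your argument is correct and is essentially the paper's own proof: derive $S_x^+S_y^-+S_x^-S_y^+\leq 4J(J^2-S_x^3S_y^3)$ from \eqref{eq:estimate}, sum over edges, and transport via \eqref{eq:potential}--\eqref{eq:kinetic}; then obtain the induced--subgraph bound from the observation that the quadratic form of $A_N'$ coincides with that of $A_N$ on functions extended by zero. Your write--up is in fact more careful than the paper's, spelling out the infinite--graph bookkeeping on $\ell_0^2(\mathfrak{M}_N)$ and the extension--by--zero argument explicitly.
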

\begin{proof}
From Lemma \ref{lemma:relbound1}, Estimate \eqref{eq:estimate}, we have for any $x\sim y$:
\begin{equation}
S_x^+S_y^-+S_x^-S_y^+\leq 4J(J^2-S_x^3S_y^3)\:.
\end{equation}
Summing over all $\{x,y\}\in\cE$ together with \eqref{eq:potential} and \eqref{eq:kinetic} yields \eqref{eq:firstbound}. The second bound follows from the fact that if $\phi\in\ell^2(\cV')$, we get $\langle \phi,A_N'\phi\rangle=\langle \phi,A_N\phi\rangle$. Then, from \eqref{eq:firstbound}, we have
\begin{equation}
\langle \phi,A_N'\phi\rangle=\langle \phi,A_N\phi\rangle\leq \langle \phi,2MV_N\phi\rangle\:,
\end{equation}
which finishes the proof.
\end{proof}

\section{An abstract result on the existence of spectral gaps} \label{sec:gaps}
Now, let $G=(V,E)$ be a countably infinite graph of bounded maximal degree. Moreover, let $A$ be a weighted adjacency operator on $\ell^2(V)$ given by
\begin{equation} \label{eq:wgraphadj}
(Af)(x)=\sum_{y\in V: y\sim x}w(x,y)f(y)\:,
\end{equation}
where the weight function $w(x,y)$ is assumed to be symmetric, i.e.\ $w(x,y)=w(y,x)$ and uniformly bounded from above and below, i.e. there exist constants $C_1,C_2>0$ such that 
\begin{equation} \label{eq:weightbound}
\forall x,y\in V:\quad C_1\leq w(x,y)\leq C_2\:.
\end{equation}
We now want to consider Schr\"odinger--type operators $H$ that are of the form
\begin{equation}
H=-gA+U\:,
\end{equation}
where $g>0$ is a coupling constant and $U$ is a multiplication by a real--valued  potential. In addition, we assume this potential to exhibit a spectral gap $\gamma:=U_2-U_1>0$, where $(U_1,U_2)\in\rho(U)$ -- the resolvent set of $U$.

Our goal is to find a sufficient condition on $g$ which ensures the existence of gaps in the spectrum of $H$ which is better than just a norm estimate involving $g\|A\|$. We will closely follow the approach from \cite[Sec.\ 3]{FS18}, where a result of this form has been shown for non--weighted Laplacians on general graphs.

To this end, let $V_1$ and $V_2$ now be the two disjoint subsets of $V$ with $V_1\cup V_2=V$, which are chosen such that
\begin{equation} \label{eq:potsupp}
U_1=\sup_{x\in V_1}U(x)\qquad\mbox{and}\qquad U_2=\inf_{y\in V_2}U(y)\:.
\end{equation}
Now, define the \emph{boundary hopping operator} $B$ from $\ell^2(V_2)$ to $\ell^2(V_1)$ to be given by
\begin{equation}
(Bf)(x):=\sum_{y\in V_2: y\sim x} w(x,y)f(y))
\end{equation}
for any $f\in\ell^2(V_2)$ and any $x\in V_1$. Likewise, its adjoint $B^*$ maps from $\ell^2(V_1)$ to $\ell^2(V_2)$ as follows:
\begin{equation}
(B^*g)(y)=\sum_{x\in V_1: x\sim y}w(y,x)g(x)
\end{equation}
for any $f\in\ell^2(V_1)$ and any $y\in V_2$.
Let us now estimate the norms of $B$ and $B^*$:
\begin{lemma}
Let 
\begin{equation}
d_1:=\sup_{x\in V_1}\left\{\sum_{y\in V_2: y\sim x}w(x,y)\right\}\quad\mbox{and}\quad d_2:=\sup_{y\in V_2}\left\{\sum_{x\in V_1: x\sim y}w(x,y)\right\}\:.
\end{equation}
We then get that $B$ and $B^*$ are bounded with
\begin{equation}
\|B\|=\|B^*\|\leq \sqrt{d_1d_2}\:.
\end{equation}
\end{lemma}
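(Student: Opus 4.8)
The plan is to bound $\|B\|^2 = \|B^*B\|$ by a Schur-test / Cauchy--Schwarz argument of exactly the same shape as the standard estimate $\|B\| \le \sqrt{\|B\|_{1\to 1}\,\|B\|_{\infty\to\infty}}$ for integral operators, here specialized to the bipartite hopping operator between $\ell^2(V_1)$ and $\ell^2(V_2)$. Concretely, for $f \in \ell^2(V_2)$ and $x \in V_1$ I would write $(Bf)(x) = \sum_{y \in V_2: y \sim x} w(x,y) f(y)$ and split the weight as $w(x,y) = w(x,y)^{1/2}\cdot w(x,y)^{1/2}$, then apply Cauchy--Schwarz in the $y$-sum:
\begin{equation}
|(Bf)(x)|^2 \le \Bigl(\sum_{y \in V_2: y\sim x} w(x,y)\Bigr)\Bigl(\sum_{y \in V_2: y\sim x} w(x,y)\,|f(y)|^2\Bigr) \le d_1 \sum_{y \in V_2: y \sim x} w(x,y)\,|f(y)|^2\:.
\end{equation}

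Summing over $x \in V_1$ and interchanging the order of summation (all terms are non-negative, so Tonelli applies — and the interchange is where bounded maximal degree together with \eqref{eq:weightbound} guarantees everything is finite), I get
\begin{equation}
\|Bf\|^2 \le d_1 \sum_{x \in V_1}\sum_{y \in V_2: y\sim x} w(x,y)\,|f(y)|^2 = d_1 \sum_{y \in V_2} |f(y)|^2 \Bigl(\sum_{x \in V_1: x \sim y} w(x,y)\Bigr) \le d_1 d_2\,\|f\|^2\:.
\end{equation}
This gives $\|B\| \le \sqrt{d_1 d_2}$. For $B^*$ one runs the identical computation with the roles of $V_1$ and $V_2$ swapped (using $w(x,y)=w(y,x)$), obtaining $\|B^*\| \le \sqrt{d_2 d_1} = \sqrt{d_1 d_2}$; alternatively one simply invokes $\|B^*\| = \|B\|$, which holds for any bounded operator, so the two bounds coincide. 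One should also note in passing that the supremum defining $d_1$ (and $d_2$) is finite: by \eqref{eq:weightbound} each term is at most $C_2$ and there are at most $d_{\max}$ neighbours, so $d_1 \le C_2 d_{\max} < \infty$, which in particular justifies treating $B$ as a genuinely bounded operator before the norm computation.

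There is no real obstacle here; the only point requiring a word of care is the summation interchange, which is legitimate purely because every summand is non-negative, so no integrability hypothesis beyond what is already assumed is needed. If one wanted to be maximally self-contained one could instead argue via $\|B\|^2 = \|B^*B\|$ and estimate the matrix entries of $B^*B$ by a Schur test, but the direct Cauchy--Schwarz route above is shorter and is exactly the argument used for the unweighted case in \cite{FS18}, with $w(x,y)\equiv 1$ replaced by the bounded symmetric weight.
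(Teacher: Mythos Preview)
Your proposal is correct and follows essentially the same argument as the paper: apply Cauchy--Schwarz (with the splitting $w=w^{1/2}\cdot w^{1/2}$) to each $|(Bf)(x)|^2$, bound the first factor by $d_1$, interchange the order of summation, and bound the resulting inner sum by $d_2$. The paper also begins by remarking that $d_1,d_2<\infty$ and then simply invokes $\|B^*\|=\|B\|$, exactly as you suggest.
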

\begin{proof}
Firstly, note that since $G$ is assumed to be of bounded maximal degree and by virtue of \eqref{eq:weightbound}, we get that $d_1$ and $d_2$ are both finite.

Now, for any $f\in \ell^2(V_2)$, consider
\begin{align}
\|Bf\|^2&=\sum_{x\in V_1}|Bf(x)|^2=\sum_{x\in V_1}\left|\sum_{y\in V_2: y\sim x} w(x,y)f(y)\right|^2\\&\leq \sum_{x\in V_1}\left[\left(\sum_{y\in  V_2: y\sim x} w(x,y)\right)\left(\sum_{y\in  V_2: y\sim x}w(x,y)|f(y)|^2\right)\right]\leq d_1\sum_{x\in V_1}\sum_{y\in  V_2: y\sim x}w(x,y)|f(y)|^2\:,
\end{align}
where the first estimate follows from a simple application of the Cauchy-Schwarz inequality. Exchanging the order of summation then yields:
\begin{equation}
d_1\sum_{x\in V_1}\sum_{y\in  V_2: y\sim x}w(x,y)|f(y)|^2=d_1\sum_{y\in V_2}\sum_{x\in  V_1: x\sim y}w(x,y)|f(y)|^2\leq d_1d_2\sum_{y\in V_2}|f(y)|^2=d_1d_2\|f\|^2\:,
\end{equation}
from which the lemma follows. 
\end{proof}
Next, we want to present the main result of this section. Given the decomposition of $V$ into two disjoint sets $V_1$ and $V_2$ such that $V=V_1\cup V_2$ and \eqref{eq:potsupp} is satisfied, we define the restricted graphs $G_i=(V_i,E_i)$, where $E_i=\{\{x,y\}\in E: x,y\in V_i\}$ for $i\in\{1,2\}$. With $A_i$, we denote the weighted adjacency operators on these restricted graphs.

\begin{proposition} \label{prop:gapcond}
Let $G=(V,E)$ be a countable graph of bounded maximal degree and for any $g>0$, let $H=-gA+U$ be a Schr\"odinger--type operator, where $A$ is given by \eqref{eq:wgraphadj} and its weights $w(x,y)$ satisfy \eqref{eq:weightbound}. Moreover, assume the potential $U$ exhibits a spectral gap $(U_1,U_2)\subset\rho(U)$ and decompose $V=V_1\cup V_2$, where $V_1\cap V_2=\emptyset$ such that \eqref{eq:potsupp} is satisfied. Moreover, let $C$ be any multiplication operator that bounds $A_2$ from above, i.e.\ satisfies $A_2\leq C$.
If 
\begin{equation} \label{eq:apriori}
E\in \left( \sup_{x\in V_1} U(x)+g\|A_1\|, \inf_{x\in V_2}(U(x)-gC(x))\right)
\end{equation}
satisfies in addition the condition
\begin{equation} \label{eq:gapcondition}
g^2<\frac{\inf_{x\in V_2}(U(x)-gC(x)-E)\cdot (E-\sup_{x\in V_1}U(x)-g\|A_1\|)}{d_1d_2}\:,
\end{equation}
 then $E\in\rho(H)$. Moreover, the Hilbert space dimension of the spectral projection of $H$ onto $(-\infty,E)$ is at least $|V_1|$.
\end{proposition}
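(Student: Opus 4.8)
The plan is to run a Feshbach--Schur (Grushin) reduction adapted to the orthogonal splitting $\ell^2(V)=\ell^2(V_1)\oplus\ell^2(V_2)$. With respect to it, $H$ has the block form
\[
H=\begin{pmatrix} H_1 & -gB\\ -gB^{*} & H_2\end{pmatrix},\qquad H_i:=-gA_i+U\upharpoonright_{V_i},
\]
the compression of $H$ to $\ell^2(V_i)$, with $B$ the boundary hopping operator of this section; all four blocks are bounded since $G$ has bounded degree and the weights satisfy \eqref{eq:weightbound}. From $U\upharpoonright_{V_1}\le\sup_{x\in V_1}U(x)$ and $-gA_1\le g\|A_1\|$ one gets the one-sided bound $H_1\le\mu_1:=\sup_{x\in V_1}U(x)+g\|A_1\|$, and from $A_2\le C$ one gets $H_2\ge\mu_2:=\inf_{x\in V_2}\big(U(x)-gC(x)\big)$. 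Since $E\in(\mu_1,\mu_2)$ by \eqref{eq:apriori}, both $H_1-E\le\mu_1-E<0$ and $H_2-E\ge\mu_2-E>0$ are boundedly invertible, with $\|(H_1-E)^{-1}\|\le(E-\mu_1)^{-1}$ and $\|(H_2-E)^{-1}\|\le(\mu_2-E)^{-1}$.

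Next I would eliminate the $V_2$--block. Because $H_2-E$ is boundedly invertible, the usual block factorization shows that $H-E$ is invertible if and only if the Schur complement
\[
T:=(H_1-E)-g^{2}B(H_2-E)^{-1}B^{*}\qquad\text{on }\ell^2(V_1)
\]
is invertible. Writing $T=(H_1-E)\big(I-g^{2}(H_1-E)^{-1}B(H_2-E)^{-1}B^{*}\big)$, the Neumann series for the bracket converges provided $g^{2}\|(H_1-E)^{-1}\|\,\|B\|^{2}\,\|(H_2-E)^{-1}\|<1$; inserting the two resolvent bounds from the previous paragraph together with $\|B\|\le\sqrt{d_1d_2}$ turns this inequality into precisely the hypothesis \eqref{eq:gapcondition}. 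Hence $T$, and with it $H-E$, is invertible, i.e.\ $E\in\rho(H)$. (In fact $T\le H_1-E\le\mu_1-E<0$ directly, since $(H_2-E)^{-1}\ge 0$, which already yields invertibility; but \eqref{eq:gapcondition} is the explicitly checkable form used in the applications.)

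For the dimension bound I would combine the Schur identity with the min--max characterization of spectral subspaces. Put
\[
\mathcal{W}:=\big\{\,\phi\oplus g(H_2-E)^{-1}B^{*}\phi\ :\ \phi\in\ell^2(V_1)\,\big\}\subset\ell^2(V),
\]
which is the range of a bounded injective linear map out of $\ell^2(V_1)$, hence a closed subspace with $\dim\mathcal{W}=|V_1|$. Completing the square in the $V_2$--component gives, for $\psi=\phi\oplus g(H_2-E)^{-1}B^{*}\phi\in\mathcal{W}$, the identity $\langle\psi,(H-E)\psi\rangle=\langle\phi,T\phi\rangle\le(\mu_1-E)\|\phi\|^2$, which is strictly negative for $\psi\neq0$. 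Now we may assume $\dim\operatorname{ran}\mathbbm{1}_{(-\infty,E)}(H)=:k<\infty$ (otherwise the claim is immediate); if $k<|V_1|$, restricting $\mathbbm{1}_{(-\infty,E)}(H)$ to a $(k+1)$-dimensional subspace of $\mathcal{W}$ yields a nonzero $\psi\in\mathcal{W}$ with $\mathbbm{1}_{(-\infty,E)}(H)\psi=0$, so $\psi\in\operatorname{ran}\mathbbm{1}_{[E,\infty)}(H)$ and therefore $\langle\psi,(H-E)\psi\rangle\ge0$ --- contradicting the previous line. Hence $\dim\operatorname{ran}\mathbbm{1}_{(-\infty,E)}(H)\ge|V_1|$.

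I expect no real conceptual obstacle here; the content is bookkeeping. The points to handle with care are: keeping the signs consistent in the definition of $T$ and of $\mathcal{W}$; verifying that the block factorization is purely algebraic in this setting, since every operator involved is bounded and $H_2-E$ is \emph{boundedly} invertible, so that there are no domain issues; checking that $\phi\mapsto\phi\oplus g(H_2-E)^{-1}B^{*}\phi$ is genuinely injective so that $\dim\mathcal{W}=|V_1|$; and phrasing the final step so that it also covers $|V_1|=\infty$, where the assertion merely says that $\mathbbm{1}_{(-\infty,E)}(H)$ has infinite rank. The only genuinely model-specific inputs are the bound $\|B\|\le\sqrt{d_1d_2}$ established above and the compression estimates $H_1\le\mu_1$ and $H_2\ge\mu_2$.
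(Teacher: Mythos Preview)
Your Feshbach--Schur reduction is correct and is the standard route for this kind of result; the paper itself gives no independent argument but defers to \cite[Prop.~3.2]{FS18}, whose non-weighted proof proceeds by the same block decomposition and Schur complement. Your parenthetical observation is also correct: since $(H_2-E)^{-1}\ge 0$ one already has $T\le H_1-E\le \mu_1-E<0$, so the a~priori condition \eqref{eq:apriori} alone forces $E\in\rho(H)$ and the dimension bound, and \eqref{eq:gapcondition} is formally redundant as a hypothesis.
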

\begin{proof} The proof of this proposition is completely analogous to the proof for the non-weighted case, which can be found in \cite[Prop.\ 3.2]{FS18}.
\end{proof}
\begin{remark}
Note that since the Laplacian on $V_2$, given by $-\cL_2=D_2-A_2$, where $D_2$ is the weighted degree function on the subgraph $G_2$, is non-negative, we get
\begin{equation}
0\leq -\cL_2=D_2-A_2\quad\Rightarrow\quad D_2\geq A_2\:.
\end{equation}
Moreover, note that for any $x\in V_2$, we get
\begin{equation}
D_2(x)=\sum_{y\in V_2: x\sim y}w(x,y)\leq \sum_{y\in V_2: x\sim y}w(x,y)+\sum_{y\in V_1: x\sim y}w(x,y)=D(x)
\end{equation}
and thus $A_2\leq D_2\leq D\upharpoonright_{\ell^2(V_2)}$, where $D$ is the weighted degree function on the entire graph, we can always choose $C=D$ in Condition \eqref{eq:gapcondition}.
\end{remark}
\section{Gaps in the low--lying spectrum} \label{sec:droplet}
Let us now apply the results of the previous sections, in order to find sufficient conditions on $\Delta$ that ensure the existence of a spectral gap above what we will call the ``droplet spectrum". To this end, we will consider two types of graphs -- the chain and the strip of width $L$. As the spin--$1/2$ case (corresponding to $M=1$) has been studied in \cite{FS18}, we will assume $M\in\{2,3,\dots\}$ from now on.   
\subsection{Ground state gap and droplet spectrum}
For an infinite graph $\cG=(\cV,\cE)$, consider again the operators
\begin{equation}
H_\cG^N=-\frac{1}{2\Delta} A_N+V_N\:.
\end{equation}
Firstly, note that for $N=0$, the Hilbert space $\cH_0^\cG=\ell^2(\mathfrak{M}_0)$ is one--dimensional and $H_\cG^0$ acts as the zero--operator on $\ell^2(\mathfrak{M}_0)$. From this, we get $\sigma(H_\cG^0)=\{0\}$, which is the ground state energy. For any other $N\in\N$, we use Lemma \ref{lemma:relbound2}, in order to estimate
\begin{equation} \label{eq:gsgap}
H_\cG^N=-\frac{1}{2\Delta}A_N+V_N\geq \left(1-\frac{M}{\Delta}\right)V_N\:.
\end{equation}
Now, observe that for any $\mathfrak{m}\in\mathfrak{M}_N$, we have
\begin{equation}
V_N(\mathfrak{m})=\sum_{\{x,y\}\in\cE}v(\mathfrak{m}(x),\mathfrak{m}(y))\:,
\end{equation}
where $v(\mathfrak{m}(x),\mathfrak{m}(y))=0$ if and only if $\mathfrak{m}(x)=\mathfrak{m}(y)=0$ or $\mathfrak{m}(x)=\mathfrak{m}(y)=M$ and $v(\mathfrak{m}(x),\mathfrak{m}(y))\geq M/2$ else. Now, since $\cG$ is an infinite graph, for any $\mathfrak{m}\in\mathfrak{M}_N$, due to the requirement that $\sum_x\mathfrak{m}(x)=N$, note that there must exist at least one edge $\{x_0,y_0\}\in\cE$ such that $\mathfrak{m}(x_0)\neq\mathfrak{m}(y_0)$ and consequently $v(\mathfrak{m}(x_0),\mathfrak{m}(y_0))\geq M/2$. Consequently, for any $N\in\N$, we get $V_N\geq M/2$ and therefore we continue the estimate in \eqref{eq:gsgap} to get
\begin{equation}
H_\cG^N\geq\left(1-\frac{M}{\Delta}\right)V_N\geq \left(1-\frac{M}{\Delta}\right)\frac{M}{2}\:.
\end{equation}
This means that for infinite graphs, for $\Delta>M$, there is a gap of width at least $(1-\frac{M}{\Delta})\frac{M}{2}$ above the ground state energy $0$ in $\sigma(H_\cG)$.
Now, let $V_{N,1}$ and $V_{N,2}$ denote the lowest and second lowest eigenvalue of $V_N$, respectively. By $\mathcal{V}^{M,N}_1$ we then denote the set of configurations $\mathfrak{m}\in\mathcal{V}^{M,N}$ for which $V_N$ attains its minimum:
\begin{equation}
\mathcal{V}^{M,N}_1:=\left\{\mathfrak{m}\in\mathcal{V}^{M,N}:V_N(\mathfrak{m})=V_{N,1}\right\}\:.
\end{equation}
Motivated by the spin-$1/2$ case (cf.~\cite{FS18}), we call the set $\mathcal{V}_{1}^{M,N}$ the set of \emph{classical droplet configurations}. Moreover, we define $\mathcal{V}_2^{M,N}:=\mathcal{V}^{M,N}\setminus\mathcal{V}_1^{M,N}$. Let $P_1$ be the orthogonal projection onto the space of classical droplet configurations $\ell^2(\mathcal{V}_1^{M,N})$ and $P_2:=\idty-P_1$. From \eqref{eq:gsgap}, we get
\begin{equation}
P_2H_\cG^NP_2\geq \left(1-\frac{M}{\Delta}\right)V_NP_2\geq \left(1-\frac{M}{\Delta}\right)V_{N,2}P_2\:.
\end{equation}
This motivates us to define the \emph{droplet spectrum} $\delta_N$ of $H_\cG^N$ as
\begin{equation}
\delta_N:=\sigma\left(H_\cG^N\right)\cap\left(0,\left(1-\frac{M}{\Delta}\right)V_{N,2}\right)\:.
\end{equation}
In the following two sections, we will now use Proposition \ref{prop:gapcond} in order to find sufficient conditions on $\Delta$ that ensure the existence of a spectral gap above the droplet spectrum for the case of the chain and the strip.
\subsection{The chain}
Let us now consider the chain, i.e.\ $\cG=(\cV,\cE)$, where $\cV=\Z$ and $\{x,y\}\in\cE :\Leftrightarrow |x-y|=1$. 
We are now interested in finding the configurations $\mathfrak{m}$ for which $V_N(\mathfrak{m})$ attains its minimum. We will only provide a sketch of the proof for the following proposition, as it is very technical and not particularly insightful.

\begin{proposition} 
Let $\cG$ be the chain. Moreover, let $N= kM$, where $k\in\{2,3,\dots\}$. Then, the minimizers $\{\mathfrak{m}_j\}_{j=1}^M$ of $V_N$ are up to translations given by: 
	\begin{equation} \label{eq:minconfig}
	\mathfrak{m}_j(x)=\begin{cases} j\quad&\mbox{if}\quad x=0\\
	                                M\quad&\mbox{if}\quad x\in\{1,2,\dots,k-1\}\\
	                                (M-j)\quad&\mbox{if}\quad x=k\\
	                                0\quad&\mbox{else.}
	                                \end{cases}
	\end{equation}
	In particular, we get $V_N(\mathfrak{m}_j)=M^2$ for any $j\in\{1,2,\dots,M\}$.	
\end{proposition}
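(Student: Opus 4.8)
The plan is to recast the statement as a purely combinatorial optimization on the chain. Since $\cE=\{\{x,x+1\}:x\in\Z\}$ and $v(a,b)=\tfrac{M}{2}(a+b)-ab$, every $\mathfrak{m}\in\mathfrak{M}_N$ satisfies, using that $\mathfrak{m}$ has finite support and hence $\sum_{x\in\Z}(\mathfrak{m}(x)+\mathfrak{m}(x+1))=2N=2kM$,
\begin{equation*}
V_N(\mathfrak{m})=\sum_{x\in\Z}\Big(\tfrac{M}{2}\big(\mathfrak{m}(x)+\mathfrak{m}(x+1)\big)-\mathfrak{m}(x)\mathfrak{m}(x+1)\Big)=kM^2-Q(\mathfrak{m}),
\end{equation*}
where $Q(\mathfrak{m}):=\sum_{x\in\Z}\mathfrak{m}(x)\mathfrak{m}(x+1)$. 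Thus minimizing $V_N$ over $\mathfrak{M}_N$ is the same as maximizing $Q$. I would first check by a direct computation that $Q(\mathfrak{m}_j)=(k-1)M^2$ for every $j$ (for $j=M$ this is a block of $k$ consecutive fully occupied sites, and for $1\le j\le M-1$ one gets the telescoping sum $jM+(k-2)M^2+M(M-j)=(k-1)M^2$), and that the $\mathfrak{m}_j$ are pairwise inequivalent under translations; it then remains to prove $Q(\mathfrak{m})\le (k-1)M^2$ for all $\mathfrak{m}\in\mathfrak{M}_N$ and to identify the equality cases.

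For the upper bound I would reduce to configurations whose support is an interval. Listing the occupied sites as $y_1<\dots<y_p$ and putting $a_i:=\mathfrak{m}(y_i)\in\{1,\dots,M\}$, one has $\sum_{i=1}^p a_i=kM$, hence $p\ge k$, and $Q(\mathfrak{m})\le\sum_{i=1}^{p-1}a_ia_{i+1}=:F(a)$, with equality precisely when $\{y_1,\dots,y_p\}$ is an interval. So it suffices to show $F(a)\le(k-1)M^2$ whenever $p\ge k$ and $a\in\{1,\dots,M\}^p$ with $\sum_i a_i=kM$, and to see for which $a$ equality holds. Introducing the deficits $c_i:=M-a_i\in\{0,\dots,M-1\}$ and $\mathcal D:=\sum_{i=1}^p c_i=(p-k)M$, a short computation rewrites $F(a)\le(k-1)M^2$ as the equivalent inequality $\sum_{i=1}^{p-1}c_ic_{i+1}\le M\sum_{i=2}^{p-1}c_i$.

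I would then split by the value of $p-k$. For $p=k$ it is trivial ($\mathcal D=0$, so all $c_i=0$). For $p=k+1$ it follows from the elementary estimate
\begin{equation*}
\sum_{i=1}^{p-1}c_ic_{i+1}\ \le\ \Big(\sum_{i=1}^{p}c_i\Big)\Big(\sum_{i=2}^{p-1}c_i\Big),
\end{equation*}
obtained by bounding $c_1c_2\le c_1\sum_{i=2}^{p-1}c_i$, $c_{p-1}c_p\le c_p\sum_{i=2}^{p-1}c_i$ and $\sum_{i=2}^{p-2}c_ic_{i+1}\le\tfrac12\big(\sum_{i=2}^{p-1}c_i\big)^2$, together with $\mathcal D=M$ in this case; moreover tracing equality back through these estimates forces $c_2=\dots=c_{p-1}=0$. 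For $p\ge k+2$ the inequality is strict, and this is the delicate point: here one has to exploit the bounds $c_i\le M-1$ for \emph{all} $i$ simultaneously with $\sum_i c_i=(p-k)M\ge 2M$. This is the technical part alluded to in the statement, and I would carry it out by a careful but uninspiring case analysis; an alternative is an induction on $k\ge 2$ with base case $k=2$, of comparable difficulty.

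Finally I would read off the equality cases by tracing $Q(\mathfrak{m})=(k-1)M^2$ backwards: it forces the support to be an interval, $p\in\{k,k+1\}$, and all interior deficits $c_2=\dots=c_{p-1}$ to vanish. For $p=k$ this gives $a_1=\dots=a_k=M$, i.e.\ a translate of $\mathfrak{m}_M$; for $p=k+1$ it gives $a_2=\dots=a_k=M$ and $a_1+a_{k+1}=M$ with $a_1,a_{k+1}\in\{1,\dots,M-1\}$, i.e.\ a translate of $\mathfrak{m}_{a_1}$ (note that $\mathfrak{m}_j$ and $\mathfrak{m}_{M-j}$ are mirror images of one another). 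The main obstacle throughout is the strict inequality in the long-support case $p\ge k+2$.
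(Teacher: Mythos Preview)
Your opening reduction coincides with the paper's: both rewrite $V_N(\mathfrak{m})=NM-Q(\mathfrak{m})$ and reduce to maximizing $Q(\mathfrak{m})=\sum_x\mathfrak{m}(x)\mathfrak{m}(x+1)$. From that point on, however, the paper stays at the level of a heuristic sketch: it argues that a non-minimizer can always be improved by (i) rearranging the occupation numbers into a unimodal profile and (ii) ``stacking'' particles towards the center, and it illustrates this with an example rather than proving it. Your approach is genuinely different and more structured: you first reduce to interval-supported configurations (where $Q(\mathfrak{m})=F(a)$), then pass to deficit variables $c_i=M-a_i$ and rewrite the target bound as the clean inequality $\sum_{i=1}^{p-1}c_ic_{i+1}\le M\sum_{i=2}^{p-1}c_i$. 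This reformulation is correct, and your treatment of $p=k$ and $p=k+1$ (including the equality analysis forcing $c_2=\dots=c_{p-1}=0$) is sound; in particular your chain of estimates gives $\sum c_ic_{i+1}\le \mathcal D\cdot S$, which for $\mathcal D=M$ is exactly what is needed.

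Where both arguments remain incomplete is the long-support case $p\ge k+2$. The paper does not isolate it at all; you correctly identify it as the crux and honestly flag it as unproved. Note that your general bound $\sum c_ic_{i+1}\le \mathcal D S$ is useless here since $\mathcal D=(p-k)M\ge 2M$, so the constraint $c_i\le M-1$ really is essential, as you say. Your proposed fallback (``careful but uninspiring case analysis'' or induction on $k$) is not more specific than the paper's own ``very technical and not particularly insightful'' disclaimer. In short: your route is more rigorous and gets further than the paper's sketch, but the one substantive gap---strict inequality for $p\ge k+2$---is left open in both.
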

\begin{proof}[Sketch of proof]
	Firstly, observe that for any $\mathfrak{m}\in\cV^{M,N}$, the interaction potential $V_N(\mathfrak{m})$ can be simplified to
	\begin{equation} 
	V_N(\mathfrak{m})=\sum_{x\in\Z}\frac{M}{2}(\mathfrak{m}(x)+\mathfrak{m}(x+1))-\sum_{x\in\Z}\mathfrak{m}(x)\mathfrak{m}(x+1)=NM-\sum_{x\in\Z}\mathfrak{m}(x)\mathfrak{m}(x+1)\:,
	\end{equation}
	where we have used that $\sum_x\mathfrak{m}(x)=N$. In other words, finding minimizers of $V_N$ is equivalent to finding maximizers of 
	\begin{equation} \label{eq:productsum}
		\sum_{x\in\Z}\mathfrak{m}(x)\mathfrak{m}(x+1)\:.
	\end{equation}
Now, if a configuration $\mathfrak{m}$ is not of the form \eqref{eq:minconfig}, it will not maximize \eqref{eq:productsum} as there are two mechanisms that could increase the value of \eqref{eq:productsum}: firstly a rearrangement of the values of $\mathfrak{m}$, such that larger values are neighbored (increasing up to a ``centered" maximal value $\mathfrak{m}(x_{max})$ and decreasing from there) and secondly, a stacking of particles towards that centered maximum occupation number, such that larger numbers get multiplied with each other in \eqref{eq:productsum}.
\end{proof}
Let us illustrate this argument by an example:
\begin{example}\label{ex:minimizer}
	 Let $M=3$ and $N=3M=9$. Let $\mathfrak{m}(1)=3, \mathfrak{m}(2)=0, \mathfrak{m}(3)=1, \mathfrak{m}(4)=1, \mathfrak{m}(5)=0, \mathfrak{m}(6)=2, \mathfrak{m}(7)=2$ and $\mathfrak{m}(x)=0$ else. For convenience, we represent the values of this function by an array of the form $(3,0,1,1,0,2,2)$. Now, firstly we rearrange the array such that larger numbers get multiplied with each other: $(0,1,1,3,2,2,0)$, which is increasing towards the maximal value $3$ and decreasing from there. Now, we move the particles towards the center and obtain $(0,0,2,3,3,1,0)$ corresponding to a translation of the configuration $\mathfrak{m}_2$ as described in \eqref{eq:minconfig}.
\end{example}
Let us now apply Proposition \ref{prop:gapcond} to the Hamiltonians $H_\cG^{kM}$, where $k\in\{2,3,\dots\}$ in order to show the existence of a spectral gap for the restricted XXZ Hamiltonian
\begin{equation} \label{eq:restricted}
\widehat{H}:=\bigoplus_{k\geq 2} H_\cG^{kM}\:.
\end{equation}
The main reason for choosing to only treat this restricted Hamiltonian is for the purpose of simplicity of presentation and in order to avoid many additional special cases that have to be considered when allowing the full XXZ Hamiltonian.
For an application of Proposition \ref{prop:gapcond}, we now split $\cV^{M,N}$ into $V_1$ and $V_2$, where $V_1$ is the set of all configurations $\mathfrak{m}\in\cV^{M,N}$, for which $V_N$ attains its minimal value $M^2$, while $V_2$ is its complement. Writing $N=kM$, let us now collect the necessary quantities:
\begin{itemize}
	\item The coupling constant $g=\frac{1}{2\Delta}$.
	\item The potential $U$ is given by $V_N$ as defined in \eqref{eq:V}. In particular, by choice of $V_1$, we have $\sup_{\mathfrak{m}\in V_1}V_N(\mathfrak{m})=M^2$.
	\item Observe that none of the configurations $\mathfrak{m}_j$ or their translates are adjacent to each other in $\cG^{M,N}$. Thus, we get $\|A_1\|=0$.
	\item By Lemma \ref{lemma:relbound2}, we have $A_1\leq 2MV_N\upharpoonright_{\ell^2(V_1)}$. Thus, we choose $C:=2MV_N\upharpoonright_{\ell^2(V_1)}$ and therefore
	\begin{equation} \label{eq:potchoice}
	U-gC=\left(1-\frac{M}{\Delta}\right)V_N\:.
	\end{equation}
	As we have assumed that $\Delta>M$, this ensures that the expression in \eqref{eq:potchoice} is non--negative.
	Since $V_2$ is the set of all configurations for which $V_N$ does not attain its minimal value $M^2$ and since $V_N$ is integer--valued, we get 
	\begin{equation} \label{eq:V2}
	\inf_{\mathfrak{m}\in V_2}V_N(\mathfrak{m})\geq M^2+1\:.
	\end{equation}
	(In fact, it is possible to find a configuration proving that equality holds in \eqref{eq:V2}. Expressed as an array of numbers as in Example \ref{ex:minimizer}, one such configuration is given by $(M,M,\dots,M,M-1,1)$) Therefore, by \eqref{eq:potchoice} and \eqref{eq:V2}, we get
	\begin{equation}
	\inf_{\mathfrak{m}\in V_2} \Big(U(\mathfrak{m})-gC(\mathfrak{m})\Big)\geq \left(1-\frac{M}{\Delta}\right)(M^2+1)\:.
	\end{equation} 
	\item Finally, we need to estimate the numbers $d_1$ and $d_2$. In order to estimate $d_1$, we begin with the observation that for any element of $V_1$, which has to be of the form $\mathfrak{m}_j$ as given by \eqref{eq:minconfig}, there are either two adjacent configurations ($j=M$) or four adjacent configurations ($1\leq j\leq M-1$). As in Example \ref{ex:minimizer}, we represent them by arrays of occupation numbers. Then the configuration $\mathfrak{m}_M$, represented by $(0,M,M,\dots,M,M,0)$ is adjacent to the configurations 
	\begin{align} \label{eq:twoconfig}
	&(1,M-1,M,\dots,M,M,0)\\&(0,M,M,\dots,M,M-1,1)\notag\:.
	\end{align}
	 For $1\leq j\leq M-1$, the configuration $\mathfrak{m}_j$ -- represented by $(0,j,M,M,\dots,M,M,M-j,0)$ is adjacent to the following four configurations 
	\begin{align} \label{eq:fourconfig}
	&(1,j-1,M,M,\dots,M,M,M-j,0)\\ &(0,j+1,M-1,M,\dots,M,M,M-j,0)\notag\\ &(0,j,M,M,\dots,M,M-1,M-j+1,0)\notag\\ &(0,j,M,M,\dots,M,M,M-j-1,1)\notag\:.
	\end{align}
	From the explicit expression for the weight function $w$, it is then not hard to show that
	\begin{equation}
	d_1=\sup_{j\in\{1,\dots,M\}}\left(\sum_{\mathfrak{n}\in V_2:\mathfrak{n}\sim\mathfrak{m}_j}w(\mathfrak{m}_j,\mathfrak{n})\right)\leq 4M^{3/2}\:.
	\end{equation}
	Conversely, in order to estimate $d_2$, note that the only configurations in $V_2$ that are adjacent to configurations in $V_1$ are -- up to translations -- given by the configurations in \eqref{eq:twoconfig} and \eqref{eq:fourconfig}. Each of the two configurations in \eqref{eq:twoconfig} would be adjacent to $\mathfrak{m}_M$, while for a fixed $j$, the four configurations in \eqref{eq:fourconfig} would be adjacent to $\mathfrak{m}_j$. Again, from the explicit expression for $w$, we therefore find
	\begin{equation}
	d_2\leq M^{3/2}\:.
	\end{equation}
\end{itemize}
Let us now use these observations in order to apply Proposition \ref{prop:gapcond} to find energy values $E$ that are not in the spectrum of $H_\cG^N$. We begin with the \emph{a priori} Condition \eqref{eq:apriori}, which requires that 
\begin{equation}
E\in \left( \sup_{x\in V_1} U(x)+g\|A_1\|, \inf_{x\in V_2}(U(x)-gC(x))\right)=\left(M^2,\left(1-\frac{M}{\Delta}\right)(M^2+1)\right)\:.
\end{equation}  
In order for this to make sense, we need to require
\begin{equation}
M^2< \left(1-\frac{M}{\Delta}\right)(M^2+1) \quad \Leftrightarrow\quad M^3+M<\Delta\:.
\end{equation}
Now, plugging everything into Condition \eqref{eq:gapcondition}, we find that if $E$ satisfies in addition the condition
\begin{equation} \label{eq:quadratic}
\frac{1}{4\Delta^2}<\frac{\Big(\left(1-\frac{M}{\Delta}\right)(M^2+1)-E\Big)\cdot(E-M^2)}{4M^3}
\end{equation}
we have $E\in\rho(H_\cG^N)$.
By further analyzing the quadratic equation in $E$ coming from \eqref{eq:quadratic}, one finds that if 
\begin{equation}
\Delta> M^3+2M^{3/2}+M\:,
\end{equation}
then this implies that the operator $H_\cG^N$ has a spectral gap that -- up to an error of $C(M)/\Delta^2$ -- coincides with
\begin{equation}
\left(M^2,\left(1-\frac{M}{\Delta}\right)(M^2+1)\right)\:.
\end{equation}
Note that since this estimate is uniform in particle numbers $N$ that are of the form $N=Mk$, this also shows the existence of the same gap in the spectrum of the restricted XXZ Hamiltonian $\widehat{H}$, which we defined in \eqref{eq:restricted}.
\subsection{The strip of width $L$}
Let $L\in\{2,3,\dots\}$ and consider the strip of width $L$, i.e. $\cG=(\cV,\cE)$, where $\cV=\Z\times\{1,\dots,L\}$ and $(z_1,\ell_1)\sim(z_2,\ell_2):\Leftrightarrow |z_1-z_2|+|\ell_1-\ell_2|=1$. Again, for technical convenience, we will only consider a special subset particle numbers $N$. To be more specific, we define
\begin{equation}
\mathcal{S}=\{kLM: k\in\N, k\geq L/2\}
\end{equation}
and consider the restricted Hamiltonian
\begin{equation} \label{eq:restricted2}
\widehat{H}:=\bigoplus_{N\in\mathcal{S}}H_\cG^N\:.
\end{equation}
 This ensures that -- up to translations in $\Z$--direction -- the unique configuration $\widehat{\mathfrak{m}}$ that minimizes $V_N$ is a rectangular configuration given by
\begin{equation}
\widehat{\mathfrak{m}}(z,\ell)=\begin{cases} M\quad &\mbox{if}\quad z\in\{1,2,\dots,k\}\\ 0 &\mbox{else,}\end{cases}
\end{equation}
with $V_N(\widehat{\mathfrak{m}})=LM^2$. Again, we define the set of configurations that minimize $V_N$ to be $V_1$ with $V_2$ being its complement. We proceed with collecting the necessary quantities which we need in order to apply Proposition \ref{prop:gapcond}:
\begin{itemize}
	\item The coupling constant $g=\frac{1}{2\Delta}$.
	\item The potential $U$ is given by $V_N$ as defined in \eqref{eq:V}. In particular, by choice of $V_1$, we have $\sup_{\mathfrak{m}\in V_1}V_N(\mathfrak{m})=V_N(\widehat{\mathfrak{m}})=LM^2$.
	\item Neither the configuration $\widehat{\mathfrak{m}}$ nor its translates are adjacent to each other in $\cG^{M,N}$. Thus, we get $\|A_1\|=0$.
	\item By Lemma \ref{lemma:relbound2}, we have $A_1\leq 2MV_N\upharpoonright_{\ell^2(V_1)}$. Thus, we choose $C:=2MV_N\upharpoonright_{\ell^2(V_1)}$ and therefore
	\begin{equation} \label{eq:potchoice2}
	U-gC=\left(1-\frac{M}{\Delta}\right)V_N\:.
	\end{equation}
	Since we made the assumption that $\Delta>M$, note that the expression in \eqref{eq:potchoice2} is non--negative.
	Since $V_2$ is the set of all configurations for which $V_N$ does not attain its minimal value $M^2$ and since $V_N$ is integer--valued, we get 
	\begin{equation} \label{eq:V3}
	\inf_{\mathfrak{m}\in V_2}V_N(\mathfrak{m})\geq LM^2+1\:.
	\end{equation}
	(We actually believe that the correct lower bound for $V_N\upharpoonright_{V_2}$ is given by $V_N\upharpoonright_{V_2}\geq LM^2+M$, however we do not have a rigorous proof for this at this point.)
	Therefore, by \eqref{eq:potchoice2} and \eqref{eq:V3}, we get
	\begin{equation}
	\inf_{\mathfrak{m}\in V_2} \Big(U(\mathfrak{m})-gC(\mathfrak{m})\Big)\geq \left(1-\frac{M}{\Delta}\right)(LM^2+1)\:.
	\end{equation} 
	\item Finally, we need to determine the numbers $d_1$ and $d_2$. In order to find $d_1$, we begin with the observation that there are exactly $2L$ configurations that are adjacent to $\widehat{m}$ -- each coming from breaking off one particle at each from the $2L$ boundary sites of $\mbox{supp}(\widehat{\mathfrak{m}})$ and moving them to the adjacent unoccupied site. 
    For any such adjacent configuration $\mathfrak{n}$, we compute
    \begin{equation} \label{eq:w}
    w(\widehat{\mathfrak{m}},\mathfrak{n})=M
    \end{equation}
    and thus we get $d_1=2LM$. Conversely, the only configurations in $V_2$ that are adjacent to $\widehat{\mathfrak{m}}$ or its translates or those configurations $\mathfrak{n}$ obtained above from breaking single particles off the boundary sites from configuration $\widehat{\mathfrak{m}}$. Each of these configurations $\mathfrak{n}$ is adjacent to precisely one configuration in $V_1$ and using \eqref{eq:w}, we get $d_2=M$. 
\end{itemize}
Again, let us now apply Proposition \ref{prop:gapcond} and find energy values $E$ that are not in the spectrum of $H_\cG^N$. From the \emph{a priori} Condition \eqref{eq:apriori}, which requires that 
\begin{equation}
E\in \left( \sup_{x\in V_1} U(x)+g\|A_1\|, \inf_{x\in V_2}(U(x)-gC(x))\right)=\left(LM^2,\left(1-\frac{M}{\Delta}\right)(LM^2+1)\right)\:,
\end{equation}  
we get the condition
\begin{equation}
LM^2< \left(1-\frac{M}{\Delta}\right)(LM^2+1) \quad \Leftrightarrow\quad LM^3+M<\Delta\:.
\end{equation}
Hence, by Proposition \ref{prop:gapcond}, Condition \eqref{eq:gapcondition}, we conclude that if $E$ satisfies in addition the condition
\begin{equation} \label{eq:quadratic2}
\frac{1}{4\Delta^2}<\frac{\Big(\left(1-\frac{M}{\Delta}\right)(LM^2+1)-E\Big)\cdot(E-LM^2)}{2LM^2}\:,
\end{equation}
we have $E\in\rho(H_\cG^N)$.
A further analysis of the quadratic equation in $E$ coming from \eqref{eq:quadratic2}, shows that if 
\begin{equation}
\Delta>LM^3+\left(1+\sqrt{2L}\right)M \:,
\end{equation}
then this implies that the operator $H_\cG^N$ has a spectral gap that -- up to an error of $C(L,M)/\Delta^2$ -- coincides with
\begin{equation}
\left(LM^2,\left(1-\frac{M}{\Delta}\right)(LM^2+1)\right)\:.
\end{equation}
Note that since this estimate is uniform in particle numbers $N\in\mathcal{S}$, this also shows the existence of the same gap in the spectrum of the restricted XXZ Hamiltonian $\widehat{H}$, which we defined in \eqref{eq:restricted2}.
\bibliographystyle{amsplain}

\begin{thebibliography}{99}

\bibitem{BabbittThomas1977} D.\ Babbitt and L.\ Thomas, {\it Ground state representation of the infinite one-dimensional Heisenberg ferromagnet. II. An explicit Plancherel formula}, Comm.~Math.~Phys.~{\bf 54} (1977), 255--278

\bibitem{BabbittThomas1978} D.\ Babbitt and L.\ Thomas, {\it Ground state representation of the infinite one-dimensional Heisenberg ferromagnet. III. Scattering theory}. J.~Math.~Phys.~{\bf 19} (1978), 1699--1704

\bibitem{BabbittThomas1979}  D.\ Babbitt and L.\ Thomas, {\it Ground state representation of the infinite one-dimensional Heisenberg ferromagnet. IV. A completely integrable quantum system}. J.~Math.~Anal.~Appl.~{\bf 72} (1979), 305--328

\bibitem{BabbittGutkin1990} D.\ Babbitt and E.\ Gutkin, {\it The plancherel formula for the infinite XXZ Heisenberg spin chain}, Lett.~Math.~Phys.~{\bf 20} (1990), 91--99


\bibitem{BW1} V.\ Beaud and S.\ Warzel, {\it Low-energy Fock-space localization for attractive hard-core particles in disorder}, Ann.~Henri~Poincar\'e~{\bf 18} (2017), 3143--3166

\bibitem{BW2} V.~Beaud and S.~Warzel, \emph{Bounds on the entanglement entropy of droplet states in the XXZ spin chain},  J. Math. Phys. \textbf{59} (2018), 012109

\bibitem{Bethe} H.\ Bethe, {\it Theorie der Metalle I: Eigenwerte und Eigenfunktionen der linearen Atomkette}, Z.~Phys.~{\bf 71} (1931), 205--226

\bibitem{Borodinetal} A.\ Borodin, I.\ Corwin, L.\ Petrov and T.\ Sasamoto, {\it Spectral theory for interacting particle systems solvable by coordinate Bethe ansatz}, Comm.\ Math.\ Phys.~{\bf 339} (2015), 1167--1245

\bibitem{Carballosa} W.\ Carballosa, R.\ Fabila-Monroy, J.\ Lea\~{n}os and L.\ M.\ Rivera, {\it Regularity and planarity of token graphs}, Discuss.~Math.~Graph~Theory~{\bf 37} (2017), 573--586


\bibitem{EKS1} A.~Elgart, A.~Klein and G.~Stolz, \emph{Many-body localization in the droplet spectrum of the random XXZ quantum spin chain}, J. Funct. Anal.  \textbf{275} (2018), 211--258


\bibitem{EKS2} A.~Elgart, A.~Klein and G.~Stolz, \emph{Manifestations of dynamical localization in the disordered XXZ spin chain}, Comm. Math. Phys.  \textbf{361} (2018), 1083--1113


\bibitem{FS13} C.\ Fischbacher and G.\ Stolz, {\it The infinite XXZ quantum spin chain revisited: Structure of low lying spectral bands and gaps}, Math.~Model.~Nat.~Phenom.~{\bf 9} (2014), 44--72

\bibitem{FS18} C.~Fischbacher and G.~Stolz, \emph{Droplet states in quantum XXZ spin systems on general graphs},  J. Math. Phys. \textbf{59} (2018), 051901

\bibitem{Gutkin1993} E.~Gutkin, {\it Plancherel formula and critical spectral behaviour of the infinite XXZ chain}. Quantum symmetries (Clausthal, 1991), 84-98, World Sci.\ Publ., River Edge, NJ, 1993

\bibitem{Gutkin2000} E.\ Gutkin, {\it Heisenberg-Ising spin chain: Plancherel decomposition and Chebyshev polynomials}. In Calogero-Moser-Sutherland Models (Montr\'eal, QC, 1997), 177-192, CRM~Ser.~Math.~Phys., Springer, New York, 2000

\bibitem{HammackSmith} R.~Hammack and G.~Smith, {\it Cycle bases of reduced powers of graphs}, preprint, \href{url}{https://arxiv.org/abs/1601.02896} 



\bibitem{KN1} T.\ Koma and B.\ Nachtergaele, {\it The spectral gap of the ferromagnetic XXZ chain}, Lett.~Math.~Phys.~{\bf 40} (1997), 1--16

\bibitem{KN2} T.\ Koma and B.\ Nachtergaele, {\it The complete set of ground states of the ferromagnetic XXZ chains}, Adv.\ Theor.\ Math.\ Phys.\ 2 (1998), 533--558

\bibitem{KNS} T.~Koma, B.~Nachtergaele and S.~Starr, {\it The spectral gap for the ferromagnetic spin--$J$ XXZ chain}, Adv.~Theor.~Math.~Phys.~{\bf 5} (2001), 1047
\bibitem{MNSS} J.~Mulherkar, B.~Nachtergaele, R.~Sims and S.~Strr, \emph{Isolated Eigenvalues of the Ferromagnetic Spin-J XXZ Chain with Kink Boundary Conditions},  J. Stat. Mech. (2008) P01016


\bibitem{NS} B.\ Nachtergaele and S.\ Starr, {\it Droplet states in the XXZ Heisenberg chain}, Comm.~Math.~Phys.~{\bf 218} (2001), 569--607

\bibitem{NSS} B.\ Nachtergaele, W.\ Spitzer and S.\ Starr, {\it Droplet excitations for the spin-$1/2$ XXZ chain with kink boundary conditions}, Ann.~Henri~Poincar\'e~{\bf 8} (2007), 165--201

\bibitem{O2} Y.\ Ouyang, Edge-isoperimetric inequalities for the symmetric product of graphs, \href{url}{https://arxiv.org/pdf/1612.08259.pdf}

\bibitem{O1} Y.\ Ouyang, The spectrum of the Heisenberg ferromagnet and graph theory, \href{url}{https://arxiv.org/pdf/1707.02446.pdf}

\bibitem{Rivera} L.\ M.\ Rivera and A.\ L.\ Trujillo-Negrete, {\it Hamiltonicity of token graphs of fan graphs}, Art~Discr.~Appl.~Math., to appear, \href{url}{https://arxiv.org/pdf/1707.05839.pdf}

\bibitem{Starr} S.\ Starr, Some properties for the low-lying spectrum of the ferromagnetic, quantum XXZ spin system, PhD Thesis, UC Davis, 2001

\bibitem{StolzProc} G.\ Stolz, \emph{Aspects of the Mathematical Theory of Disordered Quantum Spin Chains}, preprint, \href{url}{https://arxiv.org/abs/1810.05047}

\bibitem{Thomas1977} L.\ Thomas, {\it Ground state representation of the infinite one-dimensional Heisenberg ferromagnet}. I, J.~Math.~Anal.~Appl.~{\bf 59} (1977), 392-414






\end{thebibliography}

\end{document}